\documentclass[sigconf]{acmart}

\fancyhf{} 
\fancyfoot[C]{\thepage}
\setcopyright{none} 
\settopmatter{printacmref=false, printccs=true, printfolios=true} 
\usepackage{amsmath,amsfonts,amsthm}
\usepackage{mathtools}
\usepackage{algorithmic}
\usepackage{graphicx}
\usepackage{textcomp}
\usepackage{xcolor}
\usepackage{url}
\usepackage{xspace}
\usepackage{pgfplots}
\usetikzlibrary{pgfplots.groupplots}
\pgfplotsset{compat=1.3}
\usepackage[ruled,vlined, linesnumbered]{algorithm2e}
\def\BibTeX{{\rm B\kern-.05em{\sc i\kern-.025em b}\kern-.08em
    T\kern-.1667em\lower.7ex\hbox{E}\kern-.125emX}}

\Urlmuskip=0mu plus 1mu
\usepackage{subcaption}

\usepackage{tikz}
\usepackage{multirow}
\usepackage{graphicx}
\usepackage{thmtools, thm-restate}

\usepackage{chemformula}

\theoremstyle{plain}
\newcommand\numberthis{\addtocounter{equation}{1}\tag{\theequation}}

\DeclarePairedDelimiterXPP\Expn[2]{\mathbb{E}_{#1}}{[}{]}{}{

#2
}
\newcommand{\Exp}{\mathrm{E}}
\newcommand{\Var}{\mathrm{Var}}
\newcommand{\Cov}{\mathrm{Cov}}
\DeclarePairedDelimiter\abs{\lvert}{\rvert}

\mathchardef\mhyphen="2D
\newcommand{\dasgupta}{\mathcal{C}_{\mathcal{D}}}
\newcommand{\dasguptabar}{\overline{\mathcal{C}}_{\mathcal{D}}}
\newcommand{\mle}{\mathcal{C}_{\mathcal{M}}}
\newcommand{\globalalgo}{{\sc CMN-DP}\xspace}
\newcommand{\localalgo}{{\sc GenTree}\xspace}

\newcommand{\dplocalalgo}{{\sc PrivaCT}\xspace}

\newcommand{\numnodes}{n}
\newcommand{\numstates}{N}

\newcommand{\baseline}{{\sc CMN}\xspace}

\newcommand{\tree}{T}
\newcommand{\lst}{L}
\newcommand{\rst}{R}

\newcommand{\Internal}{\Lambda}
\newcommand{\inode}{\lambda}
\newcommand{\leaves}[1]{\mathrm{leaves}({#1})}
\newcommand{\lca}{T[x\lor y]}

\newcommand{\ancestor}{ancestor}
\newcommand{\ancestors}{ancestors}

\newcommand{\opttree}{T^{OPT}}
\newcommand{\optcost}{OPT}
\newcommand{\alltrees}{\mathbb{T}}

\newcommand{\minoptcost}{\dasgupta(T_{clique})}
\newcommand{\sumweights}[2]{\sum S(#1, #2)}


\newcommand{\dasguptafull}[2]{\sum_{ij}{#1}(v_i, v_j)\cdot\leaves{#2[v_i\lor v_j]}}
\newcommand{\actcost}[1]{f({#1})}
\newcommand{\dpcost}[1]{g({#1})}
\newcommand{\nrv}{R}
\newcommand{\nrvi}[2]{R_{#1}^{#2}}
\newcommand{\numL}[2]{L_{#1,#2}}
\newcommand{\numbins}{K}
\newcommand{\dv}[1]{dv_{#1}}
\newcommand{\ndv}[1]{\bar{dv}_{#1}}
\newcommand{\dc}[2]{c_{#1}^{#2}}
\newcommand{\ndc}[2]{\bar{c}_{#1}^{#2}}
\newcommand{\idealprop}{ideal clustering property\xspace}
\newcommand{\fcostinode}[1]{\theta({#1})}
\newcommand{\costinode}{\theta}

\newcommand{\simscore}[2]{S({#1},{#2})}
\newcommand{\dpsimscore}[2]{\bar{S}({#1},{#2})}
\newcommand{\simmatrix}{S}
\newcommand{\dpsimmatrix}{\bar{S}}
\newcommand{\normL}{$L_1$-norm\xspace}

\newcommand{\custombox}[2]{\begin{center}\fbox{\parbox{3.3in}{\textit{Result
{#1}: {#2}}}\xspace}\end{center}}

\newcommand{\newtext}[1]{\textcolor{black}{#1}}

\newcommand{\lastfm}{\textit{lastfm}\xspace}
\newcommand{\douban}{\textit{douban}\xspace}
\newcommand{\delicious}{\textit{delicious}\xspace}

\newcommand{\itemAvg}{\textit{itemAvg}\xspace}
\newcommand{\friendsCF}{\textit{friendsCF}\xspace}
\newcommand{\privaCTCF}{\privact-CF\xspace}
\newcommand{\serec}{SERec\xspace}
\newcommand{\privact}{{\sc PrivaCT}\xspace}
\newcommand{\tool}{{\sc PrivaCT}\xspace}

\definecolor{amethyst}{rgb}{0.6, 0.4, 0.8}
\definecolor{azure(colorwheel)}{rgb}{0.0, 0.5, 1.0}
\definecolor{chocolate(traditional)}{rgb}{0.48, 0.25, 0.0}
\definecolor{onyx}{rgb}{0.06, 0.06, 0.06}

\begin{document}

\title{Private Hierarchical Clustering in Federated Networks}

\author{Aashish Kolluri}
\affiliation{%
 \institution{National University Of Singapore}
 \country{Singapore}
}

\author{Teodora Baluta}
\affiliation{%
 \institution{National University Of Singapore}
 \country{Singapore}
}

\author{Prateek Saxena}
\affiliation{%
 \institution{National University Of Singapore}
 \country{Singapore}
}

\begin{abstract}
Analyzing structural properties of social networks, such as identifying their clusters or finding their most central nodes, has many applications. However, these applications are not supported by federated social networks that allow users to store their social links locally on their end devices. In the federated regime, users want access to personalized services while also keeping their social links private. In this paper, we take a step towards enabling analytics on federated networks
with differential privacy guarantees about protecting the user links or contacts in the network. Specifically, we present the first work to compute hierarchical cluster trees using local differential privacy. Our algorithms for computing them are novel and come with theoretical bounds on the quality of the trees learned. The private hierarchical cluster trees enable a service provider to query the community structure around a user at various granularities without the users having to share their raw contacts with the provider. We demonstrate the utility of such queries by redesigning the state-of-the-art social recommendation algorithms for the federated setup. Our recommendation algorithms significantly outperform the baselines which do not use social contacts and are on par with the non-private algorithms that use contacts.
\end{abstract}

\maketitle

\section{Introduction}
\label{sec:intro}

Millions of users are moving towards more decentralized or federated services due to trust and privacy concerns
of centralized data storage~\cite{twitterMovingAway,millions-left-whatsapp}. Federated social
networks are a popular alternative to the centralized \newtext{networks} as the user's social connections are kept
private and stored on user-controlled devices. For instance, Signal, a federated social network, has
recently seen tens of millions of users joining the platform after WhatsApp announced a
controversial privacy policy update~\cite{whatsappPolicyUpdate}. This movement towards
decentralization has incentivized companies to develop techniques to port conventional end
applications to the federated setup~\cite{floc-whitepaper,google-federated-analytics,brave-ad}.

Conventional end applications for social and communication networks, such as personalized
recommendations and online advertising, require finding similar users on the network that have an
influence over a target user~\cite{jamali2009trustwalker,yan2009much}.  For instance, a user's
network neighbors and other users in their close community are known to influence the user's
behavior~\cite{golbeck2005computing}. Therefore, the ability to probe the close community of a
target user is important to analyze. In the centralized setup, such queries are trivial because
the users' data is stored in the servers of a centralized service provider and thus the whole network is available.
However, in the federated setup this network structure is not available to the service
provider. Moreover, the users of a service often derive value and benefits from
conventional end applications, while expecting a reasonable privacy guarantee on their sensitive data~\cite{brave-ad}.  This new paradigm raises the problem of redesigning conventional applications for the
federated setup, which in turn requires supporting queries such as ``Who are the users in the close community of a target user?''

It is useful to think of a federated social network as a graph. Users form vertices of the graph and the outgoing links of a user are only kept locally with it.
A starting point for answering queries over a federated network which preserve {\em privacy of the individual links} is the local differential privacy (LDP) framework~\cite{kasi2008FOCS}. The LDP regime
eliminates trust in a centralized authority and thus naturally fits the federated setup~\cite{diaspora,mastodon,peepeth}. In this setup, each user locally
adds noise to the query outputs computed on its raw data to preserve privacy before releasing it. Users
communicate with an untrusted authority which combines the noisy outputs of the users, possibly with
more than one round of communication, to compute the final result. 
In the LDP model, the noise added by every user in
multiple rounds of communication adversely affects the utility of a query. Consequently, queries
in the LDP regime have largely been restricted to simple statistical queries such as counts and
histograms over categorical, set-valued data, or key-value
pairs~\cite{hsu2012LDPHeavyHitters,duchi2023MinimaxRates,rappor2014CCS,ye2019KeyValueLDP,gu2019LDPRange},
with only a few exceptions~\cite{qin2017CCS}.
So, more complex queries that probe the sensitive link or community structure around a target user remain a challenge in the LDP regime.

As our first contribution, we propose learning a well-known data structure called a {\em hierarchical cluster tree} (HCT) over a federated network of users in the LDP regime. 
An HCT is a tree representation of the network which clusters similar vertices at
various levels.  At lower levels,  small tightly-connected clusters emerge; and, at higher levels, larger clusters and structural hubs are captured.
Observe that an HCT naturally captures the communities around a user at various granularities.
In the centralized setup, HCTs are commonly used in recommender
systems~\cite{shepitsen2008personalized,pinterestEngineering,bateniGoogle,bateni2017NIPS}, intrusion detection~\cite{khan2007VLDB}, link
prediction~\cite{lu2011Physica} and even in applications beyond computer
science~\cite{mostavi2008GB,corpet1988NAR,cole1996RAS,david1999Ties}.
Learning HCTs with LDP guarantees unlocks such applications in the federated setup.
The only existing approach to privately learn HCTs is for the centralized setup, where the central authority is entrusted with the whole network.
However, this approach incurs large amounts of noise if it were directly adapted to fit the LDP regime as it requires multiple queries on the network structure.
Therefore, we ask: \textit{Can we compute
HCTs over federated networks (graphs) with acceptable privacy and utility? }

We present the first algorithm called \dplocalalgo to learn HCTs over federated graphs in the LDP framework. In the process, we also design a novel randomized algorithm called \localalgo, to compute HCTs in the federated setup without differential privacy. To achieve this, we make the key observation that a recently proposed cost function to measure the quality of HCTs couples well
with a known differentially private construct called degree vectors. Our design choices are principled and guided throughout by theoretical utility analysis. Specifically, we show that \localalgo creates HCTs within $O(\frac{\log n}{n^2})$ approximation error of the ideal HCT in expectation, where $n$ is the number of vertices in the given federated graph. Its differential private version called \dplocalalgo has an additive approximation error term bounded by a quantity that depends only on $n$, not on the edge structure of the graph. Therefore, once we fix a graph with $n$ vertices, the expected error (or loss in utility) can be analytically bounded for various choices of the privacy budget $\epsilon$~\cite{privacybook}. Further, \dplocalalgo requires querying each user just once.

Finally, we show a concrete application that directly benefits from our private HCTs: social recommendation systems.
To alleviate issues such as lack of data when a new user joins the \newtext{centralized recommendation service}, or to increase the recommendation quality, social recommender systems use additional cross-site data such as users' social network~\cite{deng2013personalized,perera2017exploring,roy2012socialtransfer,osborne2012bieber}.
The key insight is to use the private HCTs to identify users in the close communities of a target
user. We show that \dplocalalgo can be readily integrated into both
traditional~\cite{tang2013social} and state-of-the-art~\cite{wang2018collaborative} algorithms to
provide recommendations in a setup where users' privacy is important and users do not rely on a
trusted central entity.
On $3$ commonly used social recommendation datasets, we demonstrate that \dplocalalgo-based private social recommendation algorithms significantly outperform the algorithms that do not use social information. At the same time, their performance is comparable to the non-private social recommendation algorithms. Therefore, \dplocalalgo can be used to improve recommendation quality by utilizing the users' social information from federated networks, while providing a strong privacy ($\epsilon=1$). 

To summarize, we claim the following contributions:
\begin{itemize}
    \item \textbf{\em Conceptual}: We propose the first work that learns hierarchical cluster trees from a federated network in the local differential privacy model, to the best of our knowledge. 
    \item \textbf{\em Technical}: We propose two novel algorithms (Section~\ref{sec:solution}) with utility guarantees: 1)
      \localalgo learns hierarchical cluster trees which are close to the ideal in expectation  and
      2) \dplocalalgo learns private hierarchical cluster trees whose utility loss can be bounded
      for various choices of privacy budget. We evaluate the quality of our learned cluster trees by
      comparing with the state-of-the-art solution developed for differential privacy in \newtext{a} centralized
      setup, showing empirical improvements in the obtained utility of the hierarchical cluster
      trees (Section~\ref{sec:eval}). Further, the observed difference in  empirical utility of our differentially-private and \newtext{non-private} versions of our algorithms is small.
    \item \textbf{\em End application}: Using \dplocalalgo-based HCTs,
  we redesign the state-of-the-art social recommendation algorithms such that users in the federated
  setup can be served \newtext{high-}quality recommendations without having to share their contacts with the
  service provider. Our recommendation algorithms outperform the baselines which do not use social connections and are on-par with the non-private baselines on $3$ popular social recommendation datasets (Section~\ref{sec:case-study}).
\end{itemize}

\section{Motivation \& Problem}
\label{sec:problem}

We illustrate our problem setup in the context of networks where edges are private information such as communication and social networks. The structure of social networks captures influence relationships between users which are instrumental in conventional applications such as serving recommendations~\cite{golbeck2005computing} and link prediction~\cite{lu2011link}. At the heart of these applications lies the idea that users' preferences are influenced by other users on the network.  A user's influence is projected beyond his immediate neighbors into the broader community and decreases as the distance on the network increases. Ideally, having the network structure allows us to understand a user's preferences based on his close community consisting of his neighbors, second-degree neighbors, and so on. Therefore, the queries that we are specifically interested in answering are: {\em ``Which users are in the close community of a target user?''}  where closeness dictates the granularity of exploration~\cite{osborne2012bieber,deng2013personalized}.
We show how these queries are useful in social recommendations. 

\subsection{Motivating Example}
Recommender systems are widely used to help users find interesting products,
services, or other users from a large set of choices. These systems are crucial
for the user experience and user retention~\cite{netflixblog}.
Despite their success, recommender systems suffer from the problem of data
sparsity. For example, when a new user joins the \newtext{recommendation system} there is no history to base
the recommendations on (known as the cold start problem).
To mitigate this problem, recommender systems benefit from cross-site
data such as online social networks~\cite{perera2017exploring,liu2019discrete,tang2013social}.
For example, YouTube recommender accuracy was increased using information from
Google+~\cite{deng2013personalized}, Twitter and Amazon were increased with
Cheetah Mobile~\cite{osborne2012bieber,perera2020towards}.
In fact, this has lead to an area of research called {\em social
recommendation}~\cite{golbeck2006generating} which is based on the idea that a
user's preferences are similar to or influenced by their friends and other
users in the close community.  The reason behind this observation can be
explained by social correlation theories~\cite{singla2008yes}.

\begin{figure}
    \centering
    \includegraphics[scale=0.23]{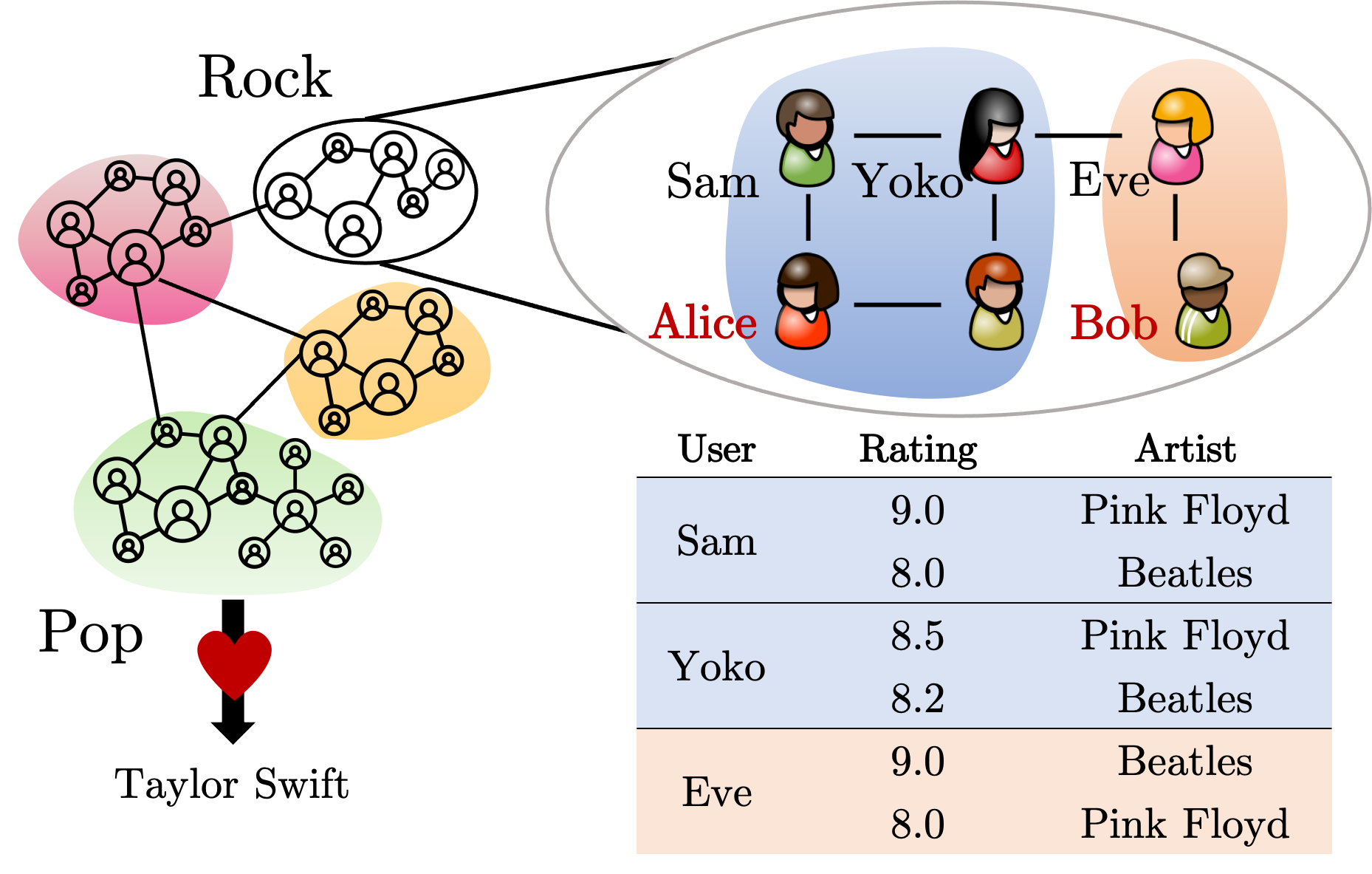}
    \caption{To recommend Alice and Bob an artist, we use the preferences of users in their close
      communities. Therefore, Alice is recommended Pink Floyd, then Beatles, and vice-versa for Bob.
      The amount of ``closeness'' can be set to just first-degree neighbors or to a higher degree
      depending on the scenario. Both of them are not recommended Taylor Swift, a pop artist popular
      in another community.
    }
    \label{fig:motivating_example}
\end{figure}

Consider a music streaming \newtext{service} that recommends artists to users.
On the streaming \newtext{service}, each user rates artists from $1$ to $10$ either
explicitly or implicitly by the number of times they listened to that particular
artist.
\newtext{Hence, the service stores user's item preferences on its centralized servers.}
The recommendation algorithm additionally uses social
contacts to make recommendations.
\newtext{The social contacts can be obtained by asking users to share their
contacts directly while signing up (e.g., Google or Facebook contacts).}
In Figure~\ref{fig:motivating_example}, we illustrate such a \newtext{service} where we
zoom into the community of users interested in rock music.
When new users Alice and Bob sign up for the streaming service, they connect with
their friends who belong to this broad community.
A good social recommendation algorithm recommends Alice the artists Pink Floyd
and the Beatles, in that order, and the same artists for Bob, but in reverse
order.
The intuition for this is captured by the community structure around these
users. Observe that Alice's close community consisting of her first- and
second-degree neighbors has rated Pink Floyd higher than the Beatles. For
example, Sam, her direct neighbor, has rated Pink Floyd $9.0$ and the Beatles,
$8.0$.
In the real world, this means that Alice's friends and friends of her friends
have rated Pink Floyd higher than the Beatles.
Alice's next closest community consisting of third-degree neighbors (Eve) rates
the Beatles higher than Pink Floyd but they still listen to and rate Pink Floyd
highly $>8.0$. However, Taylor Swift, a popular artist in another part of the
network, is rated lower in this community. Thus, Taylor Swift does not appear in
the top recommendations for either Alice or Bob.

Currently, centralized streaming services query for close communities around a target user.
In this work, we take a step
towards supporting such queries in
the federated setup.
This would enable conventional end applications in the federated setup without the need for users to share their contacts.

\subsection{Problem Setup}
\label{sec:prob_setup}
A federated social network is a graph $G:(V,E)$ with $n=|V|$ vertices.
Each vertex in $V$ corresponds (say) to a separate user who stores the list of its neighbors locally.
We assume that the graph is static thus there is no change in the set of
vertices or edges.  There is an untrusted central {\em authority} who knows the
registered users on the network. However, the central authority does not know
the edges of any user. Users want to keep their edges private.  This setup is
common in the real-world federated social networks~\cite{diaspora,mastodon}.
We assume that the users are honest and they want good services from the central
authority.  Therefore, the central authority can query the users on their
private edge information and they reply as long as they are given a {\em
reasonable} privacy guarantee.

In this setup, our goal is to enable the aforementioned queries without the central authority
knowing the edges of a user. At the same time, the user should not incur a prohibitive computation
or communication cost. We turn to a local differential privacy framework as it is widely regarded as
a strong privacy guarantee that can be given to a user, while allowing the central authority to
extract useful information from the user. 

\paragraph{\newtext{Differential Privacy}}
The differential privacy (DP) framework helps to bound the privacy loss incurred by a user by participating in a computation.

\begin{definition}[Differential Privacy]
For any two datasets, $D$ and $D'$ such that ($|D-D'|\leq1$), a randomized query $M:D\rightarrow{}
S$ satisfies $\epsilon$-DP if
\begin{align*}
Pr(M(D)\in s)\leq e^\epsilon Pr(M(D')\in s)
\end{align*}
Here, $s\subseteq S$ is any possible subset of all outputs of $M$.
\end{definition}

The $\epsilon$ parameter is called the privacy loss and it bounds the ratio of probabilities of observing any chosen output with any two input datasets $D$ and $D'$ that differ in one element.
The parameter $\epsilon$ captures the maximum loss in privacy resulting from
one run of the algorithm $M$. We can see that the lower the value of $\epsilon$, the closer the two output probability distributions are, therefore, the higher the privacy. Every time $M$ is run and its outputs are released the value of $\epsilon$ increases implying more loss of privacy.
However, once the outputs of $M$ are public then they can be reused many times without losing additional privacy as given by the {\em post-processing property}~\cite{privacybook}.
In our setup, the central authority is untrusted which corresponds to the \newtext{local differential privacy (LDP)} regime where each user stores the edge information locally and uses a DP mechanism on it.
Therefore, the input dataset for the $i$-th vertex in our problem is a binary array $D_i:\{0,1\}^{n}$ with $1$ in the $j$-th index if an edge exists between vertices $i$ and $j$, and $0$ otherwise. 
Since users do not release their edge information but the identities are known to the central
authority, we work with {\em edge local differential privacy}~\cite{qin2017CCS}. In this framework, the privacy of edges (each bit in the $D_i$) of a user ($v_i$) is preserved.

\paragraph{Enabling Queries in LDP}
Our LDP setup poses a significant challenge to design queries that explore the community around a user.
For instance, even a simple query such as ``reporting the immediate contacts'' is privacy-sensitive and
requires a user to add significant amounts of noise. To illustrate, the non-private answer for a
user $v_i$ would be the original bit vector $D_i$. To preserve LDP, we can use a popular strategy called
randomized response for this query. In randomized response, for each bit in $D_i$ the user reports the
original value of $D_i[j]$ with a high probability, say $70\%$, and reports the flipped value
otherwise. This strategy confers plausible deniability: even if the user $u_i$ reported that $u_j$ is their immediate contact, there is a $30\%$ chance that there is in fact no edge between $u_i$ and $u_j$. Observe that by using this strategy a user will claim to have $700,000$
contacts for a network of size $1,000,000$ even if they have just $70$ contacts.
Consequently, most
of the existing work provide LDP queries for less sensitive numerical queries such as ``number of
contacts'' or aggregate statistics.
Our goal is to support more complex queries that ask for the community around a user at various levels of closeness (i.e., second-degree, third-degree, and so on). If the private algorithm requires users to release their data for every degree of closeness, additional noise is required.
Therefore, exploring the community around users at different
granularity on a federated graph without adding significant noise is challenging.

 
\section{Hierarchical Cluster Trees}
\label{sec:hct}

\begin{figure}[t]
  \begin{center}
  \resizebox{0.9\linewidth}{!}{
    \includegraphics[scale=0.1]{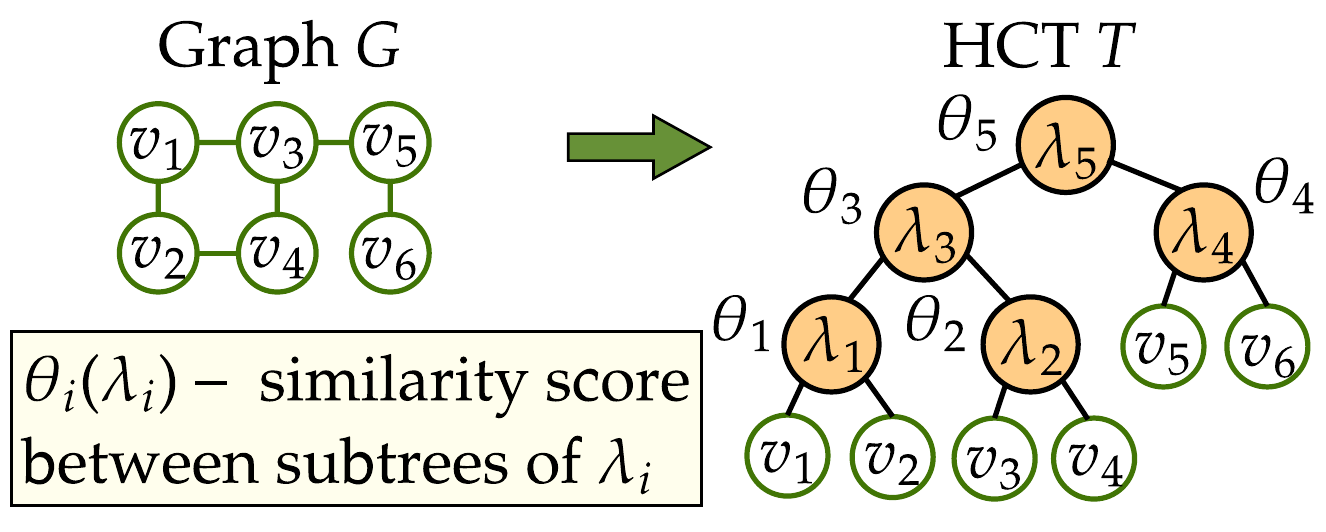}
  }
     \end{center}
     \caption{The network from Figure~\ref{fig:motivating_example} and its candidate HCT.}
  \label{fig:dend}
\end{figure}

To support complex community queries at different granularities without additional noise, we propose to use a well-known data structure called hierarchical cluster tree (HCT). HCTs are used in many applications ~\cite{shepitsen2008personalized, khan2007VLDB,lu2011Physica,mostavi2008GB,corpet1988NAR,cole1996RAS,david1999Ties,pinterestEngineering,bateniGoogle,bateni2017NIPS}.
An HCT captures the idea that the network is a large cluster, which recursively splits
into smaller clusters of vertices, until only individual vertices remain in
each cluster. 
Thus, one key advantage is that we can learn the {\em private} HCT of the network once and then publicly release it. Any subsequent query on it for various levels of granularity does not require additional noise, due to the post-processing property.

To illustrate, in Figure~\ref{fig:dend} we draw the HCT for the sub-community from our motivating example (Figure~\ref{fig:motivating_example}). Each internal node of the HCT splits the network into two communities, one comprising of the leaves in the left subtree and the other from the right subtree. For instance, at the root the whole network is split in to two communities with Alice ($v_2$) in the left one and Bob ($v_6$) on the right. Observe that with the increasing depth of the HCT we get users in closer communities. To avoid confusion, we will henceforth refer to constituents of the HCT as nodes/links and that of the original network as vertices/edges.

Let $G:(V, E)$ be the original network. An HCT for $G$ has to group the vertices in $V$ by a measure of similarity/dissimilarity. Different applications can define different similarity notions. For
example, one natural definition is based on neighbor information---two vertices in $V$ are highly
similar if they have many common neighbors in $G$.  Degree similarity states that two nodes with a
larger difference in degrees are more dissimilar. Modularity measures the similarity of two clusters by
counting how many more edges between them exist than that predicted in a random
graph~\cite{newman2006Modularity}.
It is easy to see that members of an isolated clique in $G$ will be highly similar
to each other by all these definitions. 
In this work we design all our techniques using {\em
dissimilarity} scores; however, our techniques can be extended to the analogous
measures of similarity.

Generically, an HCT ($\tree$) has two components, $\Internal$ and $\theta$.
$\Internal$ is the set of internal nodes and $\theta$ is a dissimilarity measure defined as a
function that maps internal nodes to a real value. Each internal node $\inode_i\in\Internal$ is
associated with two subtrees $\lst_i$ and $\rst_i$. Since the leaves of each subtree represent the
vertices of the original graph, each internal node represents two sets of clusters, corresponding to
the left and right subtrees. Therefore $\theta(\inode_i)$ can also be thought of as the
dissimilarity score between two clusters represented by $\inode_i$. 
For instance, in Figure~\ref{fig:dend} the internal node $\inode_3$ has two clusters $\{(v_1, v_2), (v_3, v_4)\}$. The
$\theta(\inode_3)$ represents the dissimilarity between the two clusters at $\inode_3$.
Intuitively, the clusters at lower levels of the hierarchical structure should be more similar to
each other than at the higher levels i.e., closer to the tree root. This can be seen in the graph
$G$ (sub-community) as well, as $v_1, v_2, v_3, v_4$ are better clustered together than the other two vertices $v_5,v_6$.

Therefore, an HCT is meaningful only if the dissimilarity score at an
internal node $\inode$ is lower than the dissimilarity score at all of its ancestors of $\inode$. We
call this property as {\em ideal clustering property}. Next, we describe traditional algorithms to learn HCTs in the centralized setup with and without differential privacy, and the challenges of extending them to the federated setup with LDP.

\subsection{Learning an HCT in the Centralized Setup}
\label{sec:baseline}

In the centralized setup, 
the authority is trusted and knows the entire graph, i.e., it can query the raw edges in $G$ directly. 
The centralized setup offers a reasonable private baseline to compare our eventual solution to learn an HCT in the federated setup. 

Even with no DP guarantee, how do we compute $\tree$? Observe that there are combinatorially many (in $n$) cluster trees possible for $G$, since each cluster tree corresponds to a unique way of partitioning the vertices in the graph. The goal is to find a tree which preserves the ideal clustering property, and among those which do, find the one which provides the best clustering at each level.
Many traditional methods like average linkage~\cite{murtagh2012DMKD}, which are natural to use in the centralized setup, are ad-hoc and do not provide any quantitative way of measuring the quality of trees produced. A more systematic way is the algorithm
proposed by Clauset et al.~\cite{clauset2008Nature,clauset2007HRG} which we refer to as the \baseline algorithm. This algorithm is one of the most popular for computing hierarchical structures~\cite{hrgapp1, hrgapp2, hrgapp3}, and its DP version is known~\cite{qian2014KDD}.

The main idea in \baseline is to define a {\em cost function}, which quantifies how good is a cluster tree at clustering similar nodes at each level. For a given graph $G$ and a  probability assignment function $\pi$, the cost of a computed cluster tree $\tree$ is as follows:
\begin{align*}
\mle(\tree) &= \prod_{i=1}^{n-1}\left
(\pi_{i}\right)^{E_{i}}\left(1-\pi_{i}\right)^{L_{i} R_{i}-E_{i}}\text{, where }\pi_i = \pi(\inode_i)
\label{eq:globalcost}
\end{align*}

Here, $L_i$ and $R_i$ are the number of leaves in the left and right sub-trees at the $i$-th internal node $\inode_i$ in $\tree$. $E_i$ is the number of edges between the two clusters represented by $\inode_i$. The probability function $\pi$ assigns a probability score at each internal node $\inode_i$ which signifies the probability of an edge existing between a leaf (vertex) in the left subtree at $\inode_i$ and another vertex in the right. When the graph is available, $\pi_i$ is computed as $\pi_i = \frac{E_{i}}{L_{i}\cdot R_{i}}$. We will refer to edges that have one node in the left sub-tree and one in the right sub-tree as edges "crossing the clusters" rooted at $\inode_i$. The function $\pi_i$ gives a probabilistic interpretation to the cost function---it is the probability of an edge crossing the clusters rooted at $\inode_i$, for any graph (not necessarily $G$) that can be sampled using $\tree$ and $\pi$. Therefore, $\mle$ is the {\em likelihood} of sampling a graph using $\tree$ and $\pi$. 

A tree that optimizes this cost function given the underlying graph $G$, will sample $G$ that has the {\em maximum likelihood} among all graphs with $n$ nodes. Therefore, the \baseline algorithms employs the principle of maximum likelihood estimation to find a $\tree$ and $\pi$ conditioned on the given graph $G$ as evidence.
Maximizing $\mle$ by enumeratively evaluating it on the space of all possible cluster trees is intractable. Therefore, the \baseline algorithm optimizes for $\mle$ using a Markov chain Monte Carlo (MCMC) sampling procedure~\cite{mitzenmacherUpfal}. The MCMC sampling procedure is shown to converge in expectation to the desired $\pi$ for which $G$ maximizes the likelihood. 

\paragraph{\baseline-DP}The DP version of this algorithm is the state-of-the-art solution in the computation of differentially private
hierarchical structures~\cite{qian2014KDD}. This is our {\em centralized DP} baseline method and we call it
\baseline-DP. Specifically, the \baseline-DP simulates the exponential mechanism of differential
privacy~\cite{privacybook} by following a similar MCMC procedure to maximize $\mle$, and adds noise
to the edge counts $E_i$ after convergence for computing the probabilities $\pi$. Details of this
algorithm are elided here; we refer interested readers to prior work~\cite{qian2014KDD}. The key point is that both the
computation of \baseline and \baseline-DP assume access to the raw edge information of $G$. Now we ask, {\em can we extend \baseline-DP to the LDP setup?}

\subsection{Challenges: from \baseline-DP to LDP}
The cost function $\mle$ depends on the probability assignment function $\pi$ to measure the ``closeness'' between two clusters and $\pi$ uses fine-grained private information such as computing the exact number of interconnecting edges between the two clusters. To compute this fine-grained information in the federated setup, the users from one of the clusters can be asked to report the counts of their neighbors in the other cluster after adequately noising them for satisfying LDP. For instance, in Figure~\ref{fig:dend}, the number of edges ($E_3$) crossing the internal node $\inode_3$ can be calculated using the edge counts reported by users $v_1, v_2, v_3, v_4$ that cross $\inode_3$. However, searching for an  optimal $\tree$ that optimizes for $\mle$ requires computing the interconnecting edges for all possible sets of clusters in the worst case. Concretely, \baseline-DP implements an iterative search procedure that queries a user for edge information {\em thousands} of times before converging to the optimal $\tree$.  While this search works in the centralized setup where the graph is available, it is not feasible in the federated setup for two reasons. First, answering every differentially private query leads to a privacy loss and over many such queries the aggregated privacy loss will be high~\cite{privacybook}. \newtext{To exemplify, a thousand queries with an $\epsilon=0.1$ for each query will lead to an aggregated epsilon $\epsilon> 20$ with $99\%$ probability, as given by the advanced composition theorem~\cite{privacybook}. We consider $\epsilon \leq 2$ as reasonable~\cite{jagielski2020auditing}~\footnote{Usually, an $\epsilon=2$ allows an attacker to infer a random bit of the training sample (in this case an edge for each user) with $86\%$ probability.}, although prior works have considered up to $\epsilon=8$~\cite{abadi2016deep}.} Second, the search procedure for finding the optimal $\tree$ is iterative so the users have to be available for many iterations of the search to compute the closeness.

\section{Our Solution}
\label{sec:solution}

Our key ideas to tackle the aforementioned challenges are two-fold. First, instead of using a fine-grained probability assignment function $\pi$, we propose a coarse-grained method to compute the closeness between two clusters   such that the users need not be queried repeatedly for their neighbors. Second, we propose to replace the cost function $\mle$ with another cost function that can work with any coarse-grained method that captures the closeness between two clusters. These two observations enable us to design a novel hierarchical clustering algorithm which only queries the users once and the rest of the iterative search for an optimal tree happens on the server side. We detail our insights next.

\subsection{Key Insights}
\label{sec:approach}

Our first insight is to use a good coarse-grained approximation for closeness that can be obtained for a cheap privacy budget. We start from a construct called degree vectors, previously proposed in the LDP setup~\cite{qin2017CCS}. The degree vectors are a generalized version of degree counts, wherein the vertices are randomly partitioned into $\numbins$ bins (each bin has at least one vertex), and each user is asked to report how many neighbors  it has in each bin. For $\numbins=1$ this degree vector has one element and it yields just the degrees of the vertices in $G$. For $\numbins=n$, all nodes would have a unique bin, therefore the degree vectors will encode the original neighbor list for all vertices. For any $1<\numbins<n$ the idea is to preserve more edge information than just degrees and less than the exact edges. Then, the degree vector is noised by the user; a random Laplacian noise $Lap(0,\frac{1}{\epsilon})$ is added to each bin count before sending it to the untrusted authority. The $\numbins$ is usually a small value compared to the size of the network, typically $\leq \log(n)$ or a small constant~\cite{qin2017CCS}. This is intuitively good, since the noise added to the degree vector is proportional to $\numbins$.
Now we ask: What can we compute with degree vectors?
 
\begin{figure}%
  \centering
    \begin{subfigure}{0.18\textwidth}
      \includegraphics[width=\linewidth]{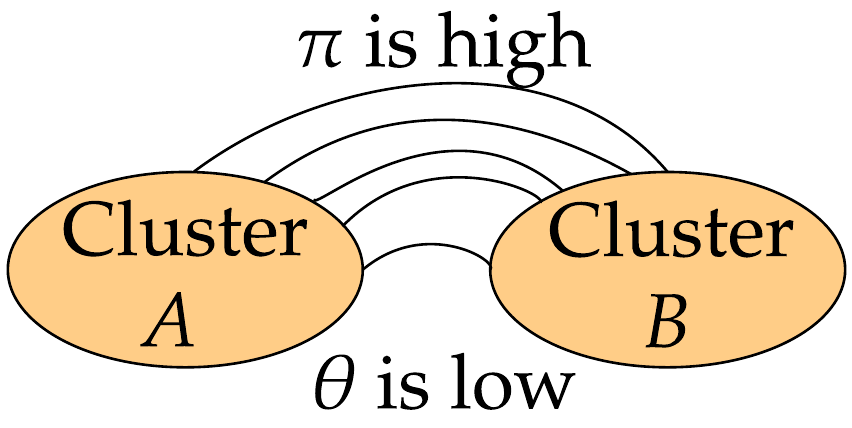}
      \caption{Close clusters} \label{fig:insight-a}
  \end{subfigure}%
  \hspace{3mm}
  \begin{subfigure}{0.18\textwidth}
    \includegraphics[width=\linewidth]{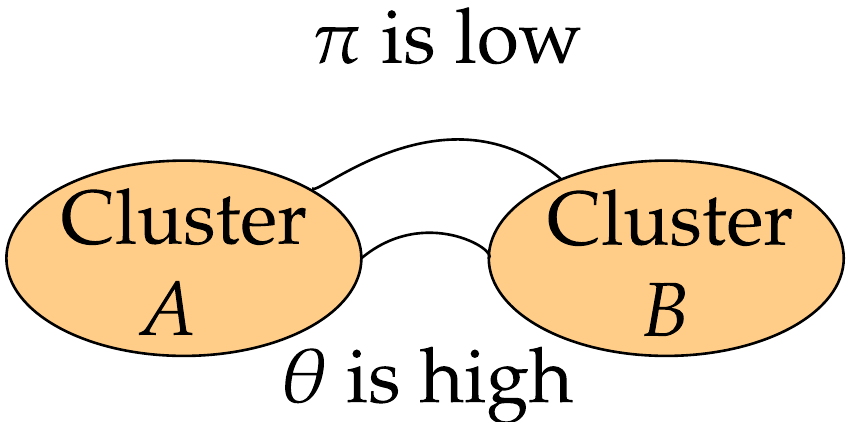}
    \caption{Distant clusters} \label{fig:insight-b}
  \end{subfigure}
  \caption{(a) Two close clusters, high $\pi$, have more interconnecting edges than two distant ones. (b) Average dissimilarity $\theta$ using \normL of degree vectors also captures closeness.}
  \label{fig:insight}
\end{figure}

It is not straightforward to compute $\mle$ using degree vectors. Nevertheless, observe that 
if we take two close clusters with respect to $\pi$, then on average each user in the left cluster has a lot of neighbors in the right cluster and vice versa. This notion is readily captured by degree vectors. Two vertices have very similar degree vectors if they have many common neighbors. Consequently, if we measure the dissimilarity as a \normL between their degree vectors then the dissimilarity for such vertices should be low. Therefore, the average dissimilarity across vertices of two {\em close} clusters will be lower than for two far clusters.
We illustrate our intuition in Figure~\ref{fig:insight}.
Figure~\ref{fig:insight-a} shows two clusters with high interconnecting edges, hence the $\pi$ associated with the two is high. Consequently, the average \normL distance, will be less. A similar argument can be made for Figure~\ref{fig:insight-b}. Therefore, we propose using the average \normL distance using degree vectors as a good coarse-grained replacement for $\pi$ to measure closeness between clusters. Observe that the degree vectors can be constructed just once for each user which can then be used to compute the average \normL distance for any two clusters. This key insight allows us to propose a different cost function that optimizes for the average \normL distance.

We point to a recently introduced cost function by Dasgupta~\cite{dasgupta2016STOC}.  Dasgupta's cost function takes a dissimilarity matrix and a tree as inputs and measures the quality of the tree for that dissimilarity matrix. Our first observation is that this cost function does not need the fine-grained edge counts between clusters (like $\mle$ does). The second and even more important observation is that a tree that optimizes Dasgupta's cost has a specific dissimilarity measure $\theta$ (defined in Section~\ref{sec:hct}). The measure represents the average dissimilarity between the vertices of two clusters computed using the dissimilarity matrix.
In fact, we formally show that such a tree satisfies the ideal clustering property as well (Section~\ref{sec:why-dasgupta}).
Thus, if we use the \normL of degree vectors to compute the dissimilarity matrix, then by optimizing for the Dasgupta's cost we get our desired coarse-grained average \normL distance as the $\theta$.
So, we directly arrive at a tree which minimizes the distance between the close clusters and maximizes the distance between the far ones.
	 
Using these insights, we design a novel randomized algorithm to sample a differentially private HCT that optimizes Dasgupta's cost function after querying a degree vector from each user. To summarize, we have shown a way to avoid multiple privacy-violating queries in learning an HCT by designing another learning strategy that allows us to replace the fine-grained queries with a coarse-grained one that preserves the ideal clustering property.

\subsection{Dasgupta's Cost Function}
\label{sec:why-dasgupta}
\begin{algorithm}[t]
\SetAlgoLined
\SetKwInOut{Input}{Input}
\SetKwInOut{Output}{Output}
\Input{Dissimilarities matrix $S$}
\Output{Hierarchical cluster tree $T^*$}
Randomly sample a tree $T_0$\;
$i=0$\;
 	\While{not converged}{
    Sample an internal node $\inode$ from $T_i$\;
    Construct two local swap trees $T_{i+1}^L, T_{i+1}^R$\;
    Pick one of them at random, call it $T_{i+1}$\;
    Transition to $T_{i+1}$ with $\min(1, e^{\dasgupta(T_{i+1})-\dasgupta(T_{i})})$\;
    $T^*$=$T_i$\;
    $i=i+1$\;
 	}
 	\Return $T^*$
 \caption{The outline of \localalgo.}
 \label{algo:localalgo}
\end{algorithm}

We now formulate the Dasgupta's cost function 
considering only full binary trees, and when a dissimilarity matrix is given as input. 

\begin{definition}[Dasgupta's Cost Function ($\dasgupta$)]
The cost of a tree $T:(\Internal, \costinode)$ with respect to a graph $G$ with
a non-negative dissimilarity matrix $S$ is given by
\begin{align*}
\mathcal{C}_{\mathcal{D}}(T) = \sum_{x\in V, y\in V} S(x,y)\cdot\abs{\leaves{T[x\lor y]}}\\
\text{ where }T[x\lor y]\text{ is least common ancestor of }x, y.
\end{align*}
\label{def:dasgupta}
\end{definition}

The expression is a weighted sum of dissimilarities of each pair of vertices, weighted by the number of leaves in the subtree of the least common ancestor of the pair of vertices. The idea of maximizing $\dasgupta$ is intuitive.
Any pair of vertices $(x, y)$ which are highly dissimilar to each other will have a
high dissimilarity score $\simscore{x}{y}$ and therefore, their least common ancestor
node (which we denote by $\lca$) should have more leaves in order to maximize
the product $\simscore{x}{y}\cdot|\leaves{\lca}|$.
It has been shown that, an optimal tree with respect to $\dasgupta$  has several nice properties such as: 
1) Two disconnected components in the graph will be separated completely into
two different clusters; 2) The cost of every tree for a clique is the same; and, 3) it
represents the clusters well in the planted partition model~\cite{dasgupta2016STOC}. Property 2) is the most relevant to us, as we will use it to bound the utility loss.

Originally, the cost function was designed for a non-negative similarity matrix with no restriction on the tree space. In the dissimilarity case, if we allow all possible trees then there is always a trivial tree that maximizes this function, i.e., the star graph with only one internal root node. Nevertheless, all the above properties hold in the dissimilarity case when the tree space is restricted to the full-binary trees~\cite{addad2018SIAM}.

More importantly, notice that the cost function does not require any edge information
of the graph but only a dissimilarity matrix, a key requirement for the design
of our algorithm in the decentralized setting.As stated previously, we show  that the tree $\tree$ that maximizes $\dasgupta$ also
satisfies the \idealprop. Further, the $\costinode$ of such a tree will just be
the average dissimilarity between the nodes of two subtrees at each internal node $\inode$. Formally, Theorem~\ref{thm:t1} captures both these observations.

\begin{restatable}{theorem}{thma}
\label{thm:t1}
Given a graph $G:(V, E)$ and a dissimilarity matrix $S$, the tree
$T^{OPT}:(\Internal,\costinode)$ that maximizes $\dasgupta$ preserves the \idealprop with
\begin{align*}
  \fcostinode{\inode} &= \frac{\sum_{x\in \lst, y \in \rst} \simscore{x}{y}}{|\lst|\cdot|\rst|}\\
                      &\text{ where $\lst$, $\rst$ are the left and right subtrees of $\inode$.}
\end{align*}
\end{restatable}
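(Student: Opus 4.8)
The plan is to prove the two assertions of the theorem in sequence: first that the induced dissimilarity measure $\costinode$ of the cost-maximizing tree is exactly the average cross-dissimilarity, and second that this $\costinode$ is non-increasing from any node to its ancestors. The first part I would obtain by a bookkeeping rewrite of Definition~\ref{def:dasgupta}, and the second by a local tree-rotation (exchange) argument that exploits the optimality of $\opttree$.

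First I would reorganize the double sum in Definition~\ref{def:dasgupta} by grouping the ordered pairs $(x,y)$ according to their least common ancestor. Every pair whose LCA is a fixed internal node $\inode$ has exactly one endpoint in $\lst$ and one in $\rst$, and the leaf count of that LCA is $\abs{\lst}+\abs{\rst}$. This regrouping yields
\begin{align*}
\dasgupta(T) = \sum_{\inode \in \Internal} \bigl(\abs{\lst}+\abs{\rst}\bigr)\, W(\inode), \qquad W(\inode) := \sum_{x \in \lst,\, y \in \rst} \simscore{x}{y}
\end{align*}
(up to the global symmetry factor of $S$, which cancels throughout). Writing $W(\inode)=\fcostinode{\inode}\cdot\abs{\lst}\abs{\rst}$ immediately identifies $\fcostinode{\inode}$ as the average dissimilarity between the two subtrees, which is the claimed formula and the natural per-node dissimilarity induced by $S$.

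Next I would establish the \idealprop against the \emph{immediate} parent only, since the statement for all ancestors then follows by transitivity: if $\fcostinode{\inode}\le\fcostinode{\,\mathrm{parent}(\inode)}$ at every internal node, then walking up any root path gives $\fcostinode{\inode}\le\fcostinode{a}$ for every ancestor $a$. Fix $\inode$ with children subtrees $A=\lst$ and $B=\rst$, let its sibling subtree be $C$, so the parent $p$ splits $A\cup B$ from $C$. I would compare the current grouping $((A,B),C)$ with the two full binary trees obtained by the local rotations grouping $(A,C)$ with sibling $B$, and $(B,C)$ with sibling $A$. The crucial observation is that these rotations leave $A$, $B$, $C$ and the entire tree above $p$ intact, so the per-node decomposition above shows that the only terms of $\dasgupta$ that change are those of the two affected nodes. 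Denoting their combined contributions $\Phi_1,\Phi_2,\Phi_3$ and invoking $\dasgupta(\opttree)\ge\dasgupta(\text{rotated tree})$, the leaf-count and cross-dissimilarity terms reduce to
\begin{align*}
\abs{B}\, W(A,C) \;\ge\; \abs{C}\, W(A,B), \qquad \abs{A}\, W(B,C) \;\ge\; \abs{C}\, W(A,B),
\end{align*}
where $W(X,Y)=\sum_{x\in X,\, y\in Y}\simscore{x}{y}$.

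Finally I would combine these two inequalities: multiplying the first by $\abs{A}$, the second by $\abs{B}$, and adding gives $\abs{A}\abs{B}\,[W(A,C)+W(B,C)]\ge(\abs{A}+\abs{B})\abs{C}\,W(A,B)$, which after dividing by $\abs{A}\abs{B}(\abs{A}+\abs{B})\abs{C}$ is exactly $\fcostinode{p}\ge\fcostinode{\inode}$. I expect the main obstacle to be the bookkeeping in the rotation step: verifying that the rotated trees are valid full binary trees, that every contribution outside the two affected nodes is genuinely unchanged (so global optimality really does reduce to comparing $\Phi_1,\Phi_2,\Phi_3$), and that the leaf-count arithmetic collapses cleanly to the two displayed inequalities. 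Once that is in place, the combination step and the transitivity argument are routine.
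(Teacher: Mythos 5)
Your proof is correct and is essentially the paper's own argument: the paper likewise compares the three local configurations $((A,B),C)$, $((A,C),B)$, $((B,C),A)$, derives the same two cost inequalities from optimality, and combines them with exactly your weights $|A|$ and $|B|$, only phrased as a proof by contradiction (assuming $\theta(\inode) > \theta(p)$ and exhibiting a strictly better swap) rather than as your direct contrapositive. Your explicit LCA regrouping of $\dasgupta$ and the reduction to the immediate parent via transitivity are just cleaner statements of steps the paper leaves implicit.
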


The full proof is provided in the Appendix~\ref{sec:proofs}. The theorem is proved by contradiction. Assuming that there exists a $\dasgupta$-optimal tree $T$ with two internal nodes $\inode_1, \inode_2$; $\inode_2 = \ancestor(\inode_1)$ such that $\theta(\inode_1)<\theta(\inode_2)$, we can always construct another tree $T'$ with a higher cost than $T$. In fact, if $T$ were the configuration $L$ in Figure~\ref{fig:localswap} then $T'$ would be one of the other configurations. 

Theorem~\ref{thm:t1} constitutes our first key analytical result. It explains why we choose Dasgupta's cost function and degree vectors together. Next, we describe our algorithm \localalgo to learn a tree that maximizes this cost function. \localalgo is an independent contribution of this work which can be used with any dissimilarity metric that captures closeness between clusters.

\begin{figure}
  \centering
  \resizebox{\linewidth}{!}{
    \includegraphics[width=\textwidth]{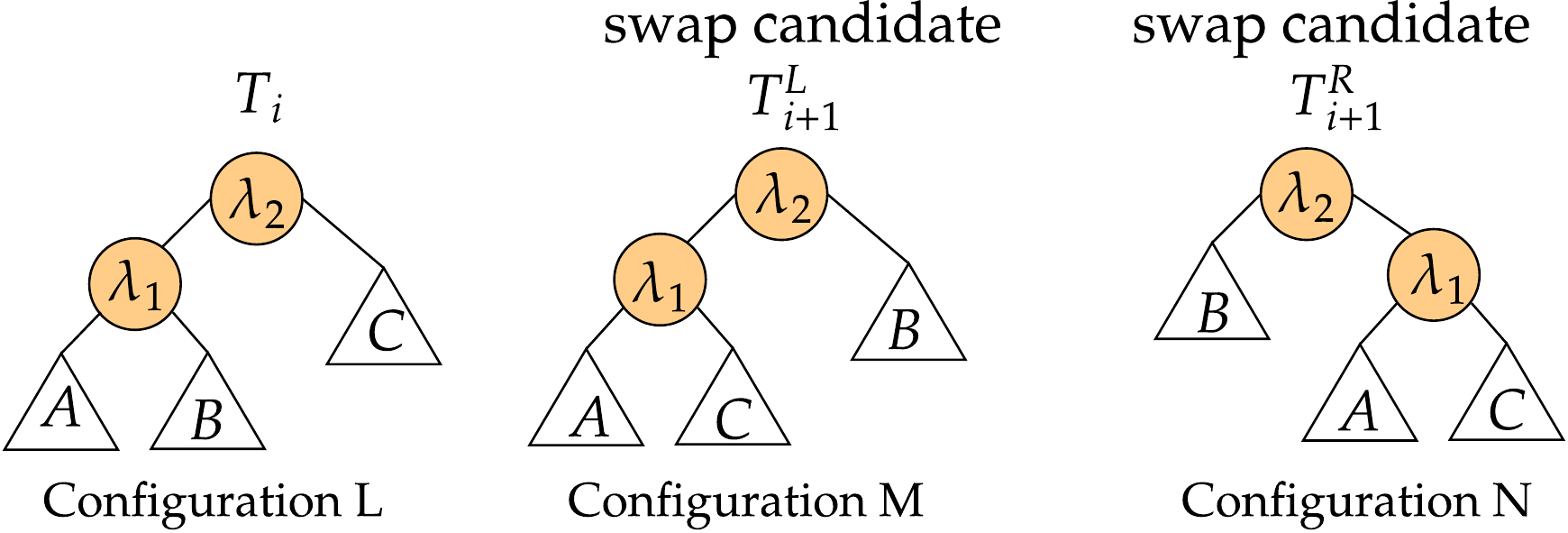}
  }
  \caption{The possible swaps of a tree.}
  \label{fig:localswap}
\end{figure}

\subsection{The \localalgo Algorithm}
\label{sec:hct-algo}

Finding a hierarchical cluster tree that maximizes $\dasgupta$ is known to be
NP-hard~\cite{dasgupta2016STOC}. Let $\alltrees$ be the set of all possible full binary trees that have $V$ as leaves. We want to find the optimal tree $\opttree \in \mathbb{T}$ that maximizes $\dasgupta$.
The number of possible trees including all the permutations from internal nodes
to leaves is at most $2^{cnlogn}$, where $c=O(1)$ is a small constant.

We propose a randomized algorithm \localalgo which will sample from the
distribution of all possible trees such that the sample probability is
exponential in the cost of the tree. Therefore, if the cost of the
tree is high then the sample probability is exponentially high. This
distribution is known as  Boltzmann (Gibbs) distribution and we choose it since it enables bounding the utility loss as we show later.
Hence, \localalgo samples a tree $T'\in \alltrees$ with probability
\begin{align*}
Pr(T'\in \alltrees) = \frac{e^{\dasgupta(T')}}{\sum_{T''\in \alltrees}{e^{\dasgupta(T'')}}}
\end{align*} 

\localalgo creates samples from this distribution by using \newtext{a Metro-polis-Hastings (MH) algorithm based} on a Markov chain with states as trees and state-transition probabilities as ratio of
costs between the states. The outline for \localalgo is given in Algorithm~\ref{algo:localalgo}. First, it starts with a randomly sampled tree
$T_0$. In each iteration, it samples a random internal node
$\inode$ and does a local swap into one of the configurations as shown in Figure~\ref{fig:localswap}. Observe that in a local swap, a subtree at the chosen internal node is detached and swapped with the subtree at the parent internal node leading to two possible local swaps. \newtext{This state transition (swap) between the trees $T_i, T_{i+1}$ is done with a
probability $\min(1, e^{\dasgupta(T_{i+1}) - \dasgupta(T_{i})})$ which is the ratio of the costs of the two states. This follows the standard MH procedure.}

Our chosen stationary distribution ensures that every state has a positive probability of being sampled. Further, every full binary tree can be obtained from another by a sequence of swaps therefore the entire chain is connected. Consequently, a standard analysis~\cite{mitzenmacherUpfal} leads to the conclusion that the Markov chain induced by \localalgo is ergodic and reversible. Therefore, \localalgo will {\em converge} to its stationary distribution, which in our case is the Boltzmann distribution, given enough time. We present the differentially private version of \localalgo next.

\subsection{The \dplocalalgo Algorithm}
\label{sec:dpgentree}

\newtext{\localalgo only requires a dissimilarity matrix to operate on. To construct a private hierarchical cluster tree using \localalgo we can first construct private dissimilarity matrix. Recall that any computation on a differentially private output is also differentially private using the post-processing property. Using the same property, we can construct a private dissimilarity matrix from private degree vectors. For that purpose, the central aggregator first sends a random partition of the users, with each bin having a set of users, to every user. Using these bins, every user constructs a private degree vector by counting their neighbors in each bin and adding Laplacian noise
$Lap(0,\frac{1}{\epsilon})$ to these counts. Finally, the users send their vectors to the aggregator so that the dissimilarity of every pair of users can be computed as measured by the \normL of their
respective degree vectors.  The outline of \dplocalalgo is summarized in the Algorithm~\ref{algo:localalgodp}. Notice that, the aggregator just requires the users to compute degree vectors privately so that \localalgo can be run on the server side to compute a private hierarchical cluster tree.} Further, \dplocalalgo requires a transfer of $O(n)$ bits of information between each user and the authority. 

\begin{algorithm}[t]
\SetAlgoLined
\SetKwInOut{Input}{Input}
\SetKwInOut{Output}{Output}
\Input{Graph $G$, dissimilarities matrix $S$}
\Output{Hierarchical cluster tree $T^*$}
    \textbf{Aggregator}: Randomly partition $G$ to $\numbins=\lfloor\log{n}\rfloor$ bins\;
	\textbf{Aggregator}: Show the $\numbins$ partitions to the user\;
	\textbf{User}: Send DP degree vectors with $Lap(0,\frac{1}{\epsilon})$ noise\;
	\textbf{Aggregator}: Compute dissimilarities $S$ using \normL\;
	\textbf{Aggregator}: Compute $T^*=\text{\localalgo}(S)$ and release $T^*$
 \caption{Outline for \dplocalalgo}
 \label{algo:localalgodp}
\end{algorithm}

 \section{Theoretical Bound On Utility Loss}
\label{sec:utilityloss}
\localalgo and \dplocalalgo are randomized algorithms. So, we need to show that they learn trees that are close to the ideal tree that optimizes $\dasgupta$ given the dissimilarity matrix computed using non differentially private degree vectors. The utility loss is given by the cost difference between the ideal tree and the trees sampled by our algorithms. Therefore, we bound the following:
\begin{enumerate}
	\item \localalgo: The utility loss for the tree output by the algorithm when dissimilarity matrix is not noised.
	\item \dplocalalgo: The utility loss for the tree output by the algorithm when the dissimilarity matrix is noised.
\end{enumerate}

For ease of analysis, we enforce the dissimilarity between two vertices of the graph to be at least $1$\footnote{We enforce this in our implementation for correctness.}, i.e., the \normL between two degree vectors is at least $1$ instead of $0$. Therefore, a clique will have a dissimilarity of $1$ for all pairs of nodes.
Following this, we state a fact that is proved in the original Dasgupta's work~\cite{dasgupta2016STOC}. 

\begin{restatable}{theorem}{thmc}
\label{thm:t0}
The cost of a clique with all dissimilarities $1$ is same for all possible trees and is equal to
\begin{align*}
\dasgupta(T_{clique}) = \frac{\numnodes^3 - \numnodes}{3}
\end{align*}
\end{restatable}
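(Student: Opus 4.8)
The plan is to prove the identity by induction on the number of leaves $\numnodes$, exploiting a recurrence that expresses $\dasgupta(T)$ in terms of the costs of the two subtrees hanging off the root. Since $\simscore{x}{y}=1$ for every pair of distinct vertices, the cost of Definition~\ref{def:dasgupta} collapses to
\begin{align*}
\dasgupta(T) = \sum_{\{x,y\}} \abs{\leaves{\lca}},
\end{align*}
the sum ranging over unordered pairs of distinct leaves; the claim is that this quantity equals $\frac{\numnodes^3-\numnodes}{3}$ for \emph{every} full binary tree $T$ on $\numnodes$ leaves.

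First I would decompose the pairs according to the root. Let the root of $T$ have a left subtree $\lst$ with $a$ leaves and a right subtree $\rst$ with $b$ leaves, so that $a+b=\numnodes$. Every pair $\{x,y\}$ falls into exactly one of three classes: both endpoints in $\lst$, both in $\rst$, or one in each. For a pair contained in $\lst$ (resp. $\rst$) the least common ancestor lies inside that subtree and its set of leaves is unchanged, so these pairs contribute exactly $\dasgupta(\lst)$ (resp. $\dasgupta(\rst)$) computed on the subtree as a standalone clique. Each of the $a\cdot b$ crossing pairs has the root itself as its least common ancestor, whose subtree contains all $\numnodes$ leaves. This yields the recurrence
\begin{align*}
\dasgupta(T) = \dasgupta(\lst) + \dasgupta(\rst) + ab(a+b).
\end{align*}

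Then I would run the induction. The base case $\numnodes=1$ is immediate, since there are no pairs and $\frac{1-1}{3}=0$. For the inductive step, assuming the formula holds on the two strictly smaller subtrees gives $\dasgupta(\lst)=\frac{a^3-a}{3}$ and $\dasgupta(\rst)=\frac{b^3-b}{3}$, so it remains only to verify the algebraic identity
\begin{align*}
\frac{a^3-a}{3} + \frac{b^3-b}{3} + ab(a+b) = \frac{(a+b)^3-(a+b)}{3},
\end{align*}
which holds because $(a+b)^3 - a^3 - b^3 = 3ab(a+b)$. The right-hand side depends only on $a+b=\numnodes$ and not on how the tree splits at the root or how each side is further subdivided, which is precisely the structure-independence asserted by the theorem; this establishes $\dasgupta(\minopttree)=\frac{\numnodes^3-\numnodes}{3}$.

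There is no genuine obstacle here — it is a routine telescoping induction — but the one place I would be careful is the bookkeeping of the crossing-pair count together with the unordered-pair convention: reading the sum of Definition~\ref{def:dasgupta} over ordered pairs (or including diagonal terms) would rescale the constant by a factor of two, so the normalization must be pinned down to land on exactly $\frac{\numnodes^3-\numnodes}{3}$.
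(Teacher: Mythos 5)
Your proof is correct. Note, however, that the paper does not actually prove Theorem~\ref{thm:t0} itself: it is stated as a fact imported from Dasgupta's original work~\cite{dasgupta2016STOC}, and no proof of it appears in the paper's appendix. Your telescoping induction --- splitting the pairs at the root into within-$\lst$, within-$\rst$, and crossing pairs to get the recurrence $\dasgupta(T)=\dasgupta(\lst)+\dasgupta(\rst)+ab(a+b)$, then using $(a+b)^3-a^3-b^3=3ab(a+b)$ --- is the standard argument for this fact and is essentially the proof given in the cited source, so it fills the gap cleanly. Your closing caveat about normalization is also well taken: Definition~\ref{def:dasgupta} as written sums over ordered pairs $x\in V, y\in V$, which would double the constant to $\frac{2(\numnodes^3-\numnodes)}{3}$; the stated value $\frac{\numnodes^3-\numnodes}{3}$ presumes the unordered-distinct-pairs convention of Dasgupta's original definition, which is the reading consistent with the rest of the paper's utility bounds.
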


This is the least optimal cost tree that can be produced with any dissimilarity matrix under our assumption, i.e., $\simscore{v_i}{v_j}\geq 1$. For real-world graphs, the optimal tree cost can be many times higher than $\dasgupta(T_{clique})$ depending on the dissimilarity matrix. This is  useful in our proofs later. From here, we represent $\dasgupta(T_{clique})$ with $\rho$ for brevity. We now analyze the utility loss for \localalgo.

\subsection{Utility Loss: \localalgo}
There is always an ideal full binary tree that maximizes the Dasgupta cost function given the dissimilarity matrix, say $\opttree$ with $OPT=\dasgupta(\opttree)$. The \localalgo is a randomized algorithm which samples different tree $T$ at convergence in every run. Therefore, the expected utility loss is the difference between $\optcost$ and the expected value of the cost of obtained tree $\Expn{T}{\dasgupta(T)}$. We now show that the expected $\dasgupta$ cost of the sampled tree at convergence is close to $\optcost$ and  is concentrated around its expectation. 

The expected cost of a sampled tree is less than $\optcost$ only by at most a small factor $\frac{c\cdot log(n)}{n^2-1}$ of the $\rho$, where $c \ll n$. The complete proofs for the theorems are provided in the Appendix~\ref{sec:proofs}. Here, we explain the key ideas used in the proofs.

\begin{restatable}{theorem}{thmk}
\label{thm:t10}
\localalgo outputs a tree $T$ whose expected cost $\Expn{T}{\dasgupta(T)}$ is a
$(1-\frac{log(n)}{n^2-1})$ multiplicative factor of  $\optcost$.
\begin{align*}
\Expn{T}{\dasgupta(T)}&\geq (1-c\cdot \frac{log(n)}{n^2-1})\cdot \optcost
\end{align*}
\end{restatable}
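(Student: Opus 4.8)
The plan is to treat \localalgo's stationary distribution as an instance of the exponential mechanism and run the standard ``escaping the bad set'' utility argument for Boltzmann sampling. Since \localalgo converges to the distribution $Pr(T)=e^{\dasgupta(T)}/Z$ with partition function $Z=\sum_{T'\in\alltrees}e^{\dasgupta(T')}$, the expected cost is
\begin{align*}
\Expn{T}{\dasgupta(T)}=\frac{\sum_{T'\in\alltrees}\dasgupta(T')\,e^{\dasgupta(T')}}{\sum_{T''\in\alltrees}e^{\dasgupta(T'')}}.
\end{align*}
The whole argument reduces to showing that, with overwhelming probability, the sampled tree has cost within a small additive gap $\Delta$ of $\optcost$, and then converting that additive gap into the claimed multiplicative factor.

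First I would fix a gap $\Delta>0$ and call a tree \emph{bad} if $\dasgupta(T')\leq \optcost-\Delta$. Bounding the mass on bad trees is immediate: each bad tree contributes at most $e^{\optcost-\Delta}$ to the numerator, so the numerator of $Pr[\dasgupta(T)\leq\optcost-\Delta]$ is at most $|\alltrees|\cdot e^{\optcost-\Delta}$, while the denominator is at least the single term $e^{\dasgupta(\opttree)}=e^{\optcost}$ coming from the optimal tree. Hence
\begin{align*}
Pr[\dasgupta(T)\leq \optcost-\Delta]\leq |\alltrees|\cdot e^{-\Delta}\leq 2^{cn\log n}\,e^{-\Delta},
\end{align*}
using the stated bound $|\alltrees|\leq 2^{cn\log n}$ on the number of full binary trees. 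Choosing $\Delta=\Theta(n\log n)$ (a constant factor above $cn\log n\ln 2$, with $\Theta(\log n)$ extra slack) drives this failure probability $p$ to be exponentially small while keeping $\Delta=O(n\log n)$.

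Next I would convert this tail bound into the expectation lower bound. Since every tree has nonnegative cost, dropping the contribution of the bad trees only weakens the bound, so
\begin{align*}
\Expn{T}{\dasgupta(T)}\geq(\optcost-\Delta)\bigl(1-p\bigr)\geq \optcost-\Delta-\optcost\cdot p.
\end{align*}
The term $\optcost\cdot p$ is negligible because $p$ is exponentially small while $\optcost$ is only polynomial in $n$ (the $L_1$ dissimilarities are polynomially bounded). It then remains to absorb the additive loss $\Delta=O(n\log n)$ into a multiplicative factor of $\optcost$.

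The crucial and final step --- and the one I expect to be the main obstacle --- is this additive-to-multiplicative conversion, which is exactly where Theorem~\ref{thm:t0} is used. Because every tree has cost at least that of the clique, $\optcost\geq\rho=\frac{n^3-n}{3}=\frac{n(n^2-1)}{3}$, so an additive loss of order $n\log n$ is at most a $\frac{c'\log n}{n^2-1}$ fraction of $\optcost$:
\begin{align*}
\Delta=O(n\log n)=O\!\left(\frac{\log n}{n^2-1}\right)\cdot\frac{n(n^2-1)}{3}\leq c'\cdot\frac{\log n}{n^2-1}\cdot\optcost.
\end{align*}
Folding the negligible $\optcost\cdot p$ term into the same constant then yields $\Expn{T}{\dasgupta(T)}\geq\bigl(1-c\cdot\frac{\log n}{n^2-1}\bigr)\optcost$, as claimed. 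The delicate points are lining up the constants so that the chosen $\Delta$ simultaneously makes $p$ tiny and stays $O(n\log n)$, and verifying $\optcost$ is polynomially bounded so that $\optcost\cdot p$ can be discarded; both become routine once the clique lower bound pins $\optcost$ from below.
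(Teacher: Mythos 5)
Your proposal is correct and follows essentially the same route as the paper: the same counting/tail-bound argument over the Boltzmann distribution (at most $2^{cn\log n}$ trees, each bad tree having relative weight at most $e^{-\Delta}$ against the optimal tree's $e^{\optcost}$), the same choice $\Delta=\Theta(n\log n)$, and the same conversion of the additive loss into a multiplicative factor via the clique lower bound $\optcost\geq\rho=\frac{n^3-n}{3}$ from Theorem~\ref{thm:t0}. If anything, you spell out more explicitly than the paper the step from the tail bound to the expectation bound (handling the $\optcost\cdot p$ term), which the paper leaves implicit in ``Lemma~\ref{lem:l2} implies Theorem~\ref{thm:t10}.''
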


To prove this, we first show that the probability of sampling a sub-optimal tree is exponentially decreasing.
\begin{restatable}{lem}{lemb}
\label{lem:l2}
The probability of sampling a tree $T$ with cost $\optcost - c'\cdot\numnodes\cdot\log(\numnodes)$ decreases exponential in $n$.
\begin{align*}
Pr(\dasgupta(T)\leq \optcost - c'\cdot\numnodes\cdot\log(\numnodes)) &\leq e^{-\numnodes\cdot\log(\numnodes)}
\end{align*}
\end{restatable}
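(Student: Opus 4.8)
The plan is to exploit the explicit Boltzmann form of the sampling distribution together with the crude counting bound on the number of full binary trees. Since \localalgo converges to the stationary distribution in which a tree $T'$ is drawn with probability $e^{\dasgupta(T')}/\sum_{T''\in\alltrees}e^{\dasgupta(T'')}$, the event in question has probability exactly
\begin{align*}
Pr(\dasgupta(T)\leq \tau) = \frac{\sum_{T:\,\dasgupta(T)\leq\tau} e^{\dasgupta(T)}}{\sum_{T''\in\alltrees} e^{\dasgupta(T'')}},
\end{align*}
where I write $\tau = \optcost - c'\cdot\numnodes\cdot\log\numnodes$ for the threshold. So the whole argument reduces to estimating this ratio from above.

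First I would bound the numerator. Every ``bad'' tree $T$ contributing to it satisfies $e^{\dasgupta(T)}\leq e^{\tau}$, and there are at most $|\alltrees|\leq 2^{c\numnodes\log\numnodes}$ full binary trees in total (the counting bound stated earlier), so the numerator is at most $2^{c\numnodes\log\numnodes}\cdot e^{\tau}$. Next I would lower bound the denominator by retaining only the single term coming from the optimal tree $\opttree$: since $\dasgupta(\opttree)=\optcost$ is the maximum cost, $\sum_{T''\in\alltrees}e^{\dasgupta(T'')}\geq e^{\optcost}$. Dividing the two estimates gives
\begin{align*}
Pr(\dasgupta(T)\leq\tau)\leq 2^{c\numnodes\log\numnodes}\cdot e^{\tau-\optcost} = 2^{c\numnodes\log\numnodes}\cdot e^{-c'\cdot\numnodes\cdot\log\numnodes}.
\end{align*}

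Finally I would rewrite the counting factor in base $e$ as $2^{c\numnodes\log\numnodes}=e^{(c\ln 2)\numnodes\log\numnodes}$ and collect exponents, obtaining the bound $e^{(c\ln 2 - c')\numnodes\log\numnodes}$. Choosing the absolute constant $c'$ large enough relative to $c$ — concretely any $c'\geq c\ln 2 + 1$ suffices — makes the exponent at most $-\numnodes\log\numnodes$, which yields the claimed $e^{-\numnodes\log\numnodes}$.

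I expect the only genuine subtlety to be the bookkeeping of constants and logarithm bases: the tree count is naturally stated in base $2$ while the Boltzmann weights are base $e$, so one must be explicit that $c'$ is selected to dominate $c\ln 2$. Everything else is a one-line counting/union bound, because the heavy lifting — that the optimal tree alone already contributes weight $e^{\optcost}$ to the partition function, so the denominator need not be analyzed further — comes for free from the definition of $\optcost$ as the maximizer of $\dasgupta$ over $\alltrees$.
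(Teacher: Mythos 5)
Your proposal is correct and follows essentially the same argument as the paper's proof: bound the Boltzmann weight of each sub-optimal tree by $e^{\tau}$, count at most $\numstates \leq 2^{c\numnodes\log\numnodes}$ such trees, and lower-bound the partition function by the single term $e^{\optcost}$ contributed by the optimal tree, so that the normalizing constants cancel. The only difference is presentational — the paper routes the same estimates through the ratio $Pr(\dasgupta(T)\leq m)/Pr(\dasgupta(T)=\optcost)$ and absorbs the $\ln 2$ base conversion into its constant, whereas you handle that bookkeeping explicitly via $c' \geq c\ln 2 + 1$, which is if anything slightly more careful.
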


Lemma~\ref{lem:l2} implies Theorem~\ref{thm:t10} and it also says that the utility loss is concentrated around its expectation. 
This is a consequence of choosing Boltzman distribution as the stationary distribution for \localalgo. The probability of sampling any sub-optimal tree exponentially reduces with its cost distance from the optimal cost.

\subsection{Utility Loss: \dplocalalgo}
\dplocalalgo samples trees based on the probabilities that are exponential in the cost of tree computed on the differentially private dissimilarities. Hence, the probability of sampling a tree $T$ will now depend on the noisy cost, say $g(T) = \dasguptabar(T)$ instead of the actual cost $f(T)=\dasgupta(T)$.
Let the variables $\simmatrix$ and $\dpsimmatrix$ be the dissimilarity and noisy dissimilarity matrices that store dissimilarities between vertices of the network. Recall that the dissimilarity between two vertices is computed as \normL of their degree vector counts. The cost computed with original dissimilarity matrix is $\actcost{T} = \dasguptafull{S}{T}$ whilst the cost computed by the noisy dissimilarity matrix is $\dpcost{T} = \dasguptafull{\overline{S}}{T}$. If $T_{dp}$ is the sampled tree at convergence then the expected utility loss is computed by taking the difference between $\optcost$ and $\Expn{T_{dp}}{f(T_{dp})}$.  In essence, we are saying that the tree sampled with differentially private dissimilarity matrix should not be away from the optimal tree with respect to the cost computed using the non-noised or actual dissimilarities. Formally,

\begin{restatable}{theorem}{thmh}
\label{thm:t8}
Let $T_{dp}$ be the output of \dplocalalgo and $\rho=\minoptcost$ be the minimal cost of any hierarchical cluster tree. The expected utility loss of \dplocalalgo is concentrated around its mean with a high probability and is given by
\begin{align*}
\optcost - \Expn{T_{dp}}{f(T_{dp})} \leq \frac{2\numbins}{\epsilon}\cdot\left(\frac{3}{2}+\frac{6}{\sqrt{\numbins}}\right)\cdot\rho
\end{align*}
\end{restatable}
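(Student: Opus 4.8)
The plan is to separate two sources of error: the gap created by sampling according to the noisy cost $\dpcost{\cdot}$ instead of the true cost $\actcost{\cdot}$, and the gap the Metropolis--Hastings sampler itself incurs (already controlled by Lemma~\ref{lem:l2} and Theorem~\ref{thm:t10}). I would fix a realization of the Laplacian noise, which fixes $\dpsimmatrix$ and hence $\dpcost{\cdot}$, and let $\opttree_g$ maximize $\dpcost{\cdot}$. Writing $\Delta := \sup_{T}\abs{\dpcost{T}-\actcost{T}}$, I chain two one-sided bounds. Since \dplocalalgo samples from the Boltzmann distribution of $\dpcost{\cdot}$, Theorem~\ref{thm:t10} applied to the noisy matrix gives $\Expn{T_{dp}}{\dpcost{T_{dp}}} \ge \dpcost{\opttree_g} - \gamma$ with $\gamma$ the sampler's loss; and $\dpcost{\opttree_g}\ge \dpcost{\opttree}\ge \actcost{\opttree}-\Delta = \optcost - \Delta$ while $\Expn{T_{dp}}{\actcost{T_{dp}}}\ge \Expn{T_{dp}}{\dpcost{T_{dp}}}-\Delta$. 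Combining,
\begin{align*}
\optcost - \Expn{T_{dp}}{\actcost{T_{dp}}} \le 2\Delta + \gamma ,
\end{align*}
so everything reduces to a high-probability uniform bound on $\Delta$, with $\gamma$ supplied by the non-private analysis.

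The key step -- and what makes a bound uniform over the exponentially many trees possible -- is Theorem~\ref{thm:t0}. For any tree $T$ we have $\dpcost{T}-\actcost{T}=\sum_{ij}(\dpsimscore{v_i}{v_j}-\simscore{v_i}{v_j})\cdot\abs{\leaves{T[v_i\lor v_j]}}$, so the triangle inequality gives $\abs{\dpcost{T}-\actcost{T}}\le \sum_{ij}\abs{\dpsimscore{v_i}{v_j}-\simscore{v_i}{v_j}}\cdot\abs{\leaves{T[v_i\lor v_j]}}$. If I can show $\abs{\dpsimscore{v_i}{v_j}-\simscore{v_i}{v_j}}\le\delta$ uniformly over all pairs, the weights factor out and I invoke the identity $\sum_{ij}\abs{\leaves{T[v_i\lor v_j]}}=\minoptcost=\rho$, which by Theorem~\ref{thm:t0} holds for \emph{every} full binary tree (it is exactly the clique cost with unit dissimilarities, which is tree-independent). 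Hence $\Delta\le\delta\cdot\rho$ with the same $\delta$ for all $T$, turning the supremum over trees into a single per-pair noise bound.

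It then remains to bound $\delta$. Since $\dpsimscore{v_i}{v_j}$ and $\simscore{v_i}{v_j}$ are the $L_1$-norms of the noisy and clean degree vectors, $\abs{\dpsimscore{v_i}{v_j}-\simscore{v_i}{v_j}}\le\sum_{k=1}^{\numbins}\abs{\eta_i^k-\eta_j^k}$, where each $\eta_i^k\sim Lap(0,\frac{1}{\epsilon})$ is independent across bins. A direct computation gives $\Exp[\abs{\eta_i^k-\eta_j^k}]=\frac{3}{2\epsilon}$ for the difference of two independent $Lap(0,\frac{1}{\epsilon})$ variables, so the mean per-pair error is $\frac{3\numbins}{2\epsilon}$; this is the source of the leading constant $\frac{3}{2}$. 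Because the sum is over $\numbins$ independent, bounded-variance terms, a concentration inequality (Chebyshev, or Bernstein using the sub-exponential tails of the Laplace differences) followed by a union bound over the $\binom{\numnodes}{2}$ pairs yields, with high probability, a deviation of order $\frac{\sqrt{\numbins}}{\epsilon}$; calibrating the constant gives $\delta\le\frac{3\numbins}{2\epsilon}+\frac{6\sqrt{\numbins}}{\epsilon}$. Substituting $\Delta\le\delta\rho$ into $\optcost-\Expn{T_{dp}}{\actcost{T_{dp}}}\le 2\Delta+\gamma$ produces the stated additive bound $\frac{2\numbins}{\epsilon}\left(\frac{3}{2}+\frac{6}{\sqrt{\numbins}}\right)\rho$, with $\gamma$ contributing only the lower-order multiplicative factor of Theorem~\ref{thm:t10}.

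The main obstacle is making the bound on $\Delta$ hold uniformly over all trees \emph{and} with high probability rather than merely in expectation for a fixed $T$. Theorem~\ref{thm:t0} disposes of the tree-uniformity for free, but the ``with high probability'' then forces a genuine concentration argument for $\sum_k\abs{\eta_i^k-\eta_j^k}$ together with a union bound; the subtlety is that the per-pair errors are not independent (they share the vertex noise $\eta_i^k$), so one must either union-bound over the $\numnodes\cdot\numbins$ underlying Laplace variables or argue the tail pairwise and pay the union-bound cost. The remaining bookkeeping is to confirm that the sampler's own loss $\gamma$ from Theorem~\ref{thm:t10} is genuinely lower order so that it can be folded into the stated additive form.
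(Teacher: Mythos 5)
Your high-level decomposition is sound and in places cleaner than the paper's: conditioning on the noise, writing $\optcost - \Expn{T_{dp}}{\actcost{T_{dp}}} \le 2\Delta + \gamma$ with $\Delta = \sup_{T}\abs{\dpcost{T}-\actcost{T}}$, and reducing $\Delta$ to a per-pair bound through the tree-independent identity $\sum_{i,j}\abs{\leaves{T[v_i\lor v_j]}}=\rho$ (Theorem~\ref{thm:t0}) is exactly how the paper also factors out the weights (Lemma~\ref{thm:t5}). The genuine gap is the step you wave through at the end: a per-pair concentration bound plus a union bound over the $\binom{\numnodes}{2}$ pairs cannot be ``calibrated'' to $\delta \le \frac{3\numbins}{2\epsilon}+\frac{6\sqrt{\numbins}}{\epsilon}$. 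The per-pair error $\sum_{l}\abs{\eta_i^l-\eta_j^l}$ has standard deviation $\Theta(\sqrt{\numbins}/\epsilon)$, so a deviation of $6\sqrt{\numbins}/\epsilon$ only buys failure probability $1/9$ per pair (Chebyshev); to survive a union bound over $\Theta(\numnodes^2)$ pairs you need per-pair failure probability $O(1/\numnodes^{2})$, which forces the deviation up to order $\numnodes\sqrt{\numbins}/\epsilon$ under Chebyshev, or order $\left(\sqrt{\numbins\log\numnodes}+\log\numnodes\right)/\epsilon$ using the sub-exponential (Bernstein) tails of the Laplace differences. Either way the deviation term grows with $\numnodes$; with $\numbins=\lfloor\log\numnodes\rfloor$ you still land at a bound of the same order $O(\numbins/\epsilon)\cdot\rho$, but you do not obtain the inequality actually stated in Theorem~\ref{thm:t8}. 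You flagged this dependence/union-bound issue yourself as the remaining ``subtlety'', but it is not bookkeeping; it is precisely where the proposed proof fails to produce the claimed constant.

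The paper never needs a bound that is uniform over pairs or trees. It fixes one tree and bounds the mean gap $\abs{\Expn{R}{\dpcost{T}}-\actcost{T}}\le\frac{3\numbins}{2\epsilon}\rho$ (Lemma~\ref{thm:t5}) and the variance $\Var_{R}[\dpcost{T}]\le\frac{4\numbins}{\epsilon^2}\rho^2$ (Lemma~\ref{thm:t6}, which is where the covariances between pairs sharing a vertex's noise---the dependence you worried about---are controlled), then applies Chebyshev once, at constant failure probability $1/9$, to get $\abs{\dpcost{T}-\actcost{T}}\le P=\frac{\numbins}{\epsilon}\left(\frac{3}{2}+\frac{6}{\sqrt{\numbins}}\right)\rho$ for that tree (Lemma~\ref{thm:t7}). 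It then reruns the Boltzmann-ratio argument of Lemma~\ref{lem:l2} with $\dpcost{\cdot}$ in the exponent, $Pr(\actcost{T_{dp}}\le m)\le \numstates\cdot e^{\dpcost{T''}-\dpcost{\opttree}}\le \numstates\cdot e^{2P}\cdot e^{\actcost{T''}-\optcost}$, so the per-tree concentration is invoked for only two trees ($\opttree$ and the $\dpcost{\cdot}$-maximizer $T''$ among low-$\actcost{\cdot}$ trees), yielding loss $2P$ plus a lower-order $\log\numstates$ term. That two-tree, constant-failure-probability structure is what produces the $\frac{6}{\sqrt{\numbins}}$ term; a sup-norm route like yours cannot. (In fairness, the paper is loose exactly where you are careful: $T''$ depends on the noise realization, so applying a fixed-tree bound to it tacitly assumes the kind of uniformity you tried to establish. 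But as a proof of the stated bound, your calibration step is the part that demonstrably fails.)
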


In order to prove the bound, we start by first bounding the expected value of $g(T)$ over the randomness in degree vectors, in terms of $f(T)$. Let $\nrv_i$ denote the random variable that represents the Laplacian noise added by $i^{th}$ vertex to its degree vectors and $\nrv = (\nrv_1,\nrv_2,\ldots,\nrv_{\numnodes})$. 

\begin{restatable}{lem}{leme}
\label{thm:t5}
The expectation $|\Expn{R}{\dpcost{T}} - \actcost{T}|$ over randomness $R$ is bounded by
\begin{align*}
|\Expn{R}{\dpcost{T}} - \actcost{T}|\leq\frac{3\numbins}{2\epsilon}\cdot \rho\\
\end{align*}
\end{restatable}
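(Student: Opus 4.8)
The plan is to reduce the lemma to two essentially independent pieces: a uniform per-pair bound on how far the Laplacian noise shifts a single dissimilarity in expectation, and the tree-independent clique identity of Theorem~\ref{thm:t0}, which lets the leaf weights be pulled out of the sum without any dependence on $T$.

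First I would expand both costs via their definitions, $\actcost{T}=\dasguptafull{S}{T}$ and $\dpcost{T}=\dasguptafull{\overline{S}}{T}$, and use linearity of expectation to write
\begin{align*}
\Expn{R}{\dpcost{T}}-\actcost{T}=\sum_{ij}\left(\Expn{R}{\dpsimscore{v_i}{v_j}}-\simscore{v_i}{v_j}\right)\cdot\abs{\leaves{T[v_i\lor v_j]}}.
\end{align*}
Taking absolute values and applying the triangle inequality, it then suffices to bound the per-pair deviation $\abs{\Expn{R}{\dpsimscore{v_i}{v_j}}-\simscore{v_i}{v_j}}$ uniformly over all pairs $(v_i,v_j)$.

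Next I would analyze one fixed pair. Writing the (noiseless) degree vectors of $v_i,v_j$ as $\numbins$-dimensional vectors and letting $a_k$ denote the difference of their $k$-th bin counts, we have $\simscore{v_i}{v_j}=\sum_{k=1}^{\numbins}\abs{a_k}$ and $\dpsimscore{v_i}{v_j}=\sum_{k=1}^{\numbins}\abs{a_k+Z_k}$, where $Z_k=(\nrv_i)_k-(\nrv_j)_k$ is the difference of the two \emph{independent} $Lap(0,\tfrac{1}{\epsilon})$ noises that the two users added in bin $k$. The reverse triangle inequality $\abs{\,\abs{a_k+Z_k}-\abs{a_k}\,}\leq\abs{Z_k}$, together with linearity of expectation applied bin-by-bin, yields
\begin{align*}
\abs{\Expn{R}{\dpsimscore{v_i}{v_j}}-\simscore{v_i}{v_j}}\leq\sum_{k=1}^{\numbins}\Expn{}{\abs{Z_k}}.
\end{align*}
The crux is the constant $\Expn{}{\abs{Z_k}}$. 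Since $Z_k$ is a difference of two i.i.d.\ Laplacians (equivalently, by symmetry, a sum of two i.i.d.\ $Lap(0,\tfrac{1}{\epsilon})$ variables), its density is $\tfrac{\epsilon}{4}(1+\epsilon\abs{z})e^{-\epsilon\abs{z}}$, and a short integration gives exactly $\Expn{}{\abs{Z_k}}=\tfrac{3}{2\epsilon}$; summing over the $\numbins$ bins gives the per-pair bound $\tfrac{3\numbins}{2\epsilon}$. This exact density computation is the one genuinely delicate step: the naive bound $\Expn{}{\abs{Z_k}}\leq 2/\epsilon$ from the ordinary triangle inequality is too weak and would not produce the stated factor $3/2$, so the sum-of-two-Laplacians density is what I would have to establish carefully.

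Finally I would factor the uniform per-pair bound out of the outer sum and invoke Theorem~\ref{thm:t0}:
\begin{align*}
\abs{\Expn{R}{\dpcost{T}}-\actcost{T}}\leq\frac{3\numbins}{2\epsilon}\sum_{ij}\abs{\leaves{T[v_i\lor v_j]}}=\frac{3\numbins}{2\epsilon}\cdot\rho,
\end{align*}
because $\sum_{ij}\abs{\leaves{T[v_i\lor v_j]}}$ is precisely $\dasgupta(T)$ evaluated with all dissimilarities set to $1$, which Theorem~\ref{thm:t0} shows equals $\rho=\tfrac{\numnodes^3-\numnodes}{3}$ for \emph{every} full binary tree. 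This tree-independence (Property~2) is exactly what allows the leaf weights to be separated from the noise analysis. The one loose end I would flag explicitly is the enforced clamping of dissimilarities to be at least $1$: since clamping at a constant is $1$-Lipschitz, it can only decrease $\abs{\dpsimscore{v_i}{v_j}-\simscore{v_i}{v_j}}$, so it preserves the per-pair estimate and does not invalidate the bound.
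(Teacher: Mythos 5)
Your proposal is correct and follows essentially the same route as the paper's proof: the per-pair reduction via the (reverse) triangle inequality, the exact density of a difference of two i.i.d.\ $Lap(0,\tfrac{1}{\epsilon})$ noises giving $\Exp[\abs{Z_k}]=\tfrac{3}{2\epsilon}$ per bin, and the final factoring of the leaf-count sum as the clique cost $\rho$ via Theorem~\ref{thm:t0}. Your explicit handling of the clamping of dissimilarities to be at least $1$ is a small addition the paper's proof leaves implicit, but it does not change the argument.
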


Then we bound the variance of $g(T)$.

\begin{restatable}{lem}{lemf}
\label{thm:t6}
The variance of $\dpcost{T}$ is bounded by
\begin{align*}
   \Var_{R}[\dpcost{T}]\leq\frac{4\numbins}{\epsilon^2}\cdot\rho^2 
\end{align*}
\end{restatable}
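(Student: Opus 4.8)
The plan is to charge the variance of $\dpcost{T}=\dasguptafull{\overline{S}}{T}$ to the individual Laplacian noise coordinates and then apply the Efron--Stein inequality. The crucial structural fact is that $\dpcost{T}$ is \emph{linear} in the noisy dissimilarities, with non-negative weights $w_{ij}=\abs{\leaves{T[v_i\lor v_j]}}$ that are fixed once the tree $T$ is fixed; all randomness enters only through the $\numnodes\numbins$ independent noises $\nrvi{i}{k}\sim Lap(0,\frac{1}{\epsilon})$.

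First I would expand the noisy dissimilarity as $\overline{S}(v_i,v_j)=\sum_{k=1}^{\numbins}\abs{d_{ij}^k+\nrvi{i}{k}-\nrvi{j}{k}}$, where $d_{ij}^k$ is the deterministic difference of the $k$-th bin counts of $v_i$ and $v_j$, so that $\dpcost{T}=\sum_{ij}w_{ij}\,\overline{S}(v_i,v_j)$. Next comes the key Lipschitz step: since $t\mapsto\abs{c+t}$ is $1$-Lipschitz, resampling a single coordinate $\nrvi{i}{k}$ to an independent copy $(\nrvi{i}{k})'$ perturbs each affected term $\overline{S}(v_i,v_j)$ by at most $\abs{\nrvi{i}{k}-(\nrvi{i}{k})'}$, and hence changes $\dpcost{T}$ by at most $W_i\cdot\abs{\nrvi{i}{k}-(\nrvi{i}{k})'}$, where $W_i=\sum_{j}w_{ij}$ is the total weight of pairs incident to $v_i$. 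Plugging this bounded difference into Efron--Stein gives
\[
\Var_R[\dpcost{T}]\le\tfrac12\sum_{i=1}^{\numnodes}\sum_{k=1}^{\numbins}W_i^2\,\Exp\!\big[(\nrvi{i}{k}-(\nrvi{i}{k})')^2\big]=\frac{2\numbins}{\epsilon^2}\sum_{i=1}^{\numnodes}W_i^2,
\]
using that the difference of two independent $Lap(0,\frac{1}{\epsilon})$ variables has second moment $4/\epsilon^2$.

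It remains to turn $\sum_i W_i^2$ into $\rho^2$, and here I would invoke Theorem~\ref{thm:t0}. Setting every dissimilarity to $1$ makes the cost equal to $\sum_{ij}w_{ij}=\rho$ for any tree, so each individual $W_i\le\rho$ and $\sum_i W_i\le 2\rho$. Therefore $\sum_i W_i^2\le(\max_i W_i)\sum_i W_i\le 2\rho^2$, and substituting yields $\Var_R[\dpcost{T}]\le\frac{4\numbins}{\epsilon^2}\rho^2$, exactly as claimed.

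The main obstacle is the dependence among the pairwise terms $\overline{S}(v_i,v_j)$: because they share noise variables, a direct attack via Cauchy--Schwarz on $\Exp_R[(\dpcost{T}-\actcost{T})^2]$ fails to exploit cancellation and produces a spurious $\numbins^2$ factor rather than the linear-in-$\numbins$ scaling the statement requires. Routing through Efron--Stein (equivalently, a coordinate-wise bounded-differences argument) is precisely what isolates each \emph{independent} noise coordinate and recovers the single factor of $\numbins$. A secondary point of care is the constant: the crude bound $\sum_i W_i^2\le(\sum_i W_i)^2$ would overshoot, so one must use $W_i\le\rho$ together with $\sum_i W_i\le 2\rho$ to land exactly on $\frac{4\numbins}{\epsilon^2}\rho^2$.
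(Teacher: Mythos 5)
Your proof is correct and lands on exactly the paper's constant, but it takes a genuinely different route. The paper attacks $\Var_{R}[\dpcost{T}]$ by expanding it over pairs into variance and covariance terms: it bounds each $\Var[\dpsimscore{i}{j}]\le\frac{4\numbins}{\epsilon^2}$ using independence across the $\numbins$ bins together with $\Var[\abs{X}]\le\Var[X]$ for the centered Laplace difference, bounds every cross term $\Cov[\dpsimscore{i}{j},\dpsimscore{i}{k}]$ for pairs sharing a vertex by the same quantity via an auxiliary inequality of the form $\Cov[\abs{X},\abs{a+X}]\le\Var[\abs{X}]$ (stated there as a ``fact'' without proof), and then collects the weights via $\sum_{i,j}\numL{i}{j}^2+2\sum_{i,j}\sum_{i,k}\numL{i}{j}\numL{i}{k}\le\bigl(\sum_{i,j}\numL{i}{j}\bigr)^2=\rho^2$. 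You instead decompose over the $\numnodes\numbins$ independent noise coordinates and apply Efron--Stein: your $1$-Lipschitz step (resampling $\nrvi{i}{k}$ moves $\dpcost{T}$ by at most $W_i\,\abs{\nrvi{i}{k}-(\nrvi{i}{k})'}$ with $W_i=\sum_j\numL{i}{j}$), the second moment $4/\epsilon^2$ of a difference of two independent $Lap(0,\frac{1}{\epsilon})$ variables, and the counting facts $W_i\le\rho$ and $\sum_i W_i\le 2\rho$ from Theorem~\ref{thm:t0} are all correct, and they combine to $\frac{2\numbins}{\epsilon^2}\sum_i W_i^2\le\frac{4\numbins}{\epsilon^2}\cdot\rho^2$ as claimed. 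What your route buys: the correlations between dissimilarities sharing a vertex are absorbed automatically by the coordinate-wise decomposition, so you never need the paper's unproven covariance fact (the weakest step of its argument), and your intermediate bound $\frac{2\numbins}{\epsilon^2}\sum_i W_i^2$ is sensitive to the tree structure and never worse than the final one. What the paper's route buys: it is elementary (nothing beyond variance--covariance algebra) and produces reusable per-pair bounds.

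One correction to your closing remark: a direct attack does not actually incur a spurious $\numbins^2$ factor. Since $\Var[\dpsimscore{i}{j}]\le\frac{4\numbins}{\epsilon^2}$ already scales linearly in $\numbins$ (by bin independence \emph{within} a single pair), simply bounding every covariance by Cauchy--Schwarz, $\Cov[\dpsimscore{i}{j},\dpsimscore{k}{l}]\le\sqrt{\Var[\dpsimscore{i}{j}]\cdot\Var[\dpsimscore{k}{l}]}\le\frac{4\numbins}{\epsilon^2}$, yields $\Var_R[\dpcost{T}]\le\frac{4\numbins}{\epsilon^2}\bigl(\sum_{i,j}\numL{i}{j}\bigr)^2=\frac{4\numbins}{\epsilon^2}\cdot\rho^2$ with no covariance bookkeeping at all; the linear-in-$\numbins$ scaling comes from within-pair bin independence, not from isolating coordinates. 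So Efron--Stein is a clean alternative here, not a necessity.
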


Finally, we use the Chebyshev's inequality to bound the value of $g(T)$ in terms of $f(T)$.

\begin{restatable}{lem}{lemg}
\label{thm:t7}
The $\dpcost{T}$ is in the interval $\left[\actcost{T}-P, \actcost{T}+P\right]$ with a high probability where $P$ is,
\begin{align*}
    P = \frac{\numbins}{\epsilon}\cdot\left(\frac{3}{2}+\frac{6}{\sqrt{\numbins}}\right)\cdot\rho
\end{align*}
\end{restatable}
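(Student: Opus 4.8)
The plan is to finish the argument with a single application of Chebyshev's inequality, treating the tree $T$ as fixed and the Laplacian noise $R$ as the only source of randomness, so that $\dpcost{T} = \dasguptabar(T)$ is a real-valued random variable whose mean and variance are already controlled by the two preceding lemmas. Write $\mu = \Expn{R}{\dpcost{T}}$ for its expectation over the noise. Lemma~\ref{thm:t6} bounds its variance by $\Var_{R}[\dpcost{T}]\leq \frac{4\numbins}{\epsilon^2}\rho^2$, and Lemma~\ref{thm:t5} bounds the gap between this mean and the true cost by $|\mu - \actcost{T}|\leq \frac{3\numbins}{2\epsilon}\rho$. These are the only two facts I need.

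First I would apply Chebyshev's inequality to $\dpcost{T}$ around its own mean $\mu$ with the threshold $t = \frac{6\sqrt{\numbins}}{\epsilon}\rho$. Using the variance bound this gives
\[
\Pr\!\left(|\dpcost{T} - \mu|\geq t\right)\leq \frac{\Var_{R}[\dpcost{T}]}{t^2}\leq \frac{4\numbins/\epsilon^2}{36\numbins/\epsilon^2} = \frac{1}{9},
\]
so with probability at least $8/9$ the noisy cost lies within $\frac{6\sqrt{\numbins}}{\epsilon}\rho$ of its expectation. The threshold is chosen precisely so that, after factoring out the common $\numbins/\epsilon$, it produces the $\frac{6}{\sqrt{\numbins}}$ term of $P$ while keeping the failure probability a fixed constant independent of $\numnodes$, $\epsilon$, and $\numbins$.

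Second, I would convert this deviation around the mean into a deviation around the true cost $\actcost{T}$ using the triangle inequality, combining it with the bias bound of Lemma~\ref{thm:t5}. On the high-probability event above,
\[
|\dpcost{T} - \actcost{T}|\leq |\dpcost{T} - \mu| + |\mu - \actcost{T}|\leq \frac{6\sqrt{\numbins}}{\epsilon}\rho + \frac{3\numbins}{2\epsilon}\rho = \frac{\numbins}{\epsilon}\left(\frac{3}{2}+\frac{6}{\sqrt{\numbins}}\right)\rho = P,
\]
which is exactly the claimed interval. Thus the $\frac{3}{2}$ term of $P$ is contributed by the estimator's bias (Lemma~\ref{thm:t5}) and the $\frac{6}{\sqrt{\numbins}}$ term by the variance-driven fluctuation (Lemma~\ref{thm:t6} through Chebyshev), the recombination using $\sqrt{\numbins}/\numbins = 1/\sqrt{\numbins}$.

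There is no genuine obstacle remaining in this step: the heavy lifting is done in Lemmas~\ref{thm:t5} and~\ref{thm:t6}, and what is left is a standard mean/variance-to-tail conversion. The only points requiring care are bookkeeping ones, namely confirming that ``high probability'' here denotes a fixed constant ($8/9$) rather than a quantity tending to $1$, and checking that this suffices for how the lemma is consumed downstream in Theorem~\ref{thm:t8}; and verifying that the two pieces of $P$ recombine exactly after pulling out the factor $\numbins/\epsilon$. If a probability closer to $1$ were instead required, I would simply enlarge the Chebyshev threshold by a constant factor, which only rescales the $\frac{6}{\sqrt{\numbins}}$ term and leaves the structure of the bound unchanged.
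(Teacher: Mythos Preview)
Your proposal is correct and follows essentially the same approach as the paper: apply Chebyshev's inequality with threshold $\frac{6\sqrt{\numbins}}{\epsilon}\rho$ to get a $1/9$ failure probability from the variance bound of Lemma~\ref{thm:t6}, then use the triangle inequality together with the bias bound of Lemma~\ref{thm:t5} to convert deviation about the mean into deviation about $\actcost{T}$, yielding exactly $P$. Your explicit identification of ``high probability'' as the constant $8/9$ and of which term of $P$ comes from bias versus fluctuation matches the paper's derivation.
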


We then use this lemma to prove Theorem~\ref{thm:t8}. Note that the utility loss does not depend on the dissimilarity matrix but only depends on the parameters $\numbins$ and $\epsilon$. It can be seen that if $\epsilon$ is very high then the utility loss converges to that of the \localalgo and even if $\epsilon$ tends to zero, our artificial bounding of the dissimilarities to be greater than $1$ will ensure that the final tree cost never goes below $\rho$. Recall that $\optcost$ is several times more than $\rho$ for the real-world graphs as it depends on the dissimilarity matrix, i.e., the structure of the graph.  We empirically observe that the utility loss is smaller than our bound for all of our evaluated graphs (see Section~\ref{sec:eval}).
 \section{Evaluation: Quality of private HCT}
\label{sec:eval}

In this section, our goal is to evaluate the utility of \dplocalalgo as measured by the quality of the HCTs it produces.

First, we want to evaluate the quality of the private HCTs that \dplocalalgo generates vs. the non-DP HCTs that \localalgo generates.
Our algorithms use a different cost function (Dasgupta's $\dasgupta$) than the centralized baseline that allows us to bound the utility loss theoretically (Section~\ref{sec:utilityloss}). Here, we measure the utility loss empirically with respect to the HCT generated by \localalgo as measured by $\dasgupta$.
Secondly, we want to evaluate the utility-privacy trade-off of the LDP regime vs. the centralized DP setup. Thus,  we compare the quality of the private HCTs that \tool generates with LDP guarantees and the trees generated by the centralized DP algorithm \globalalgo for the same privacy budget.
In order to compare these trees, we use the cost function $\mle$ which measures closeness between two clusters using the fine-grained edge counts as opposed to our coarse-grained degree vectors (see Section~\ref{sec:baseline}).
In summary, we ask the following:

\textbf{(EQ1)} What is the empirical utility loss of the private HCTs produced by \dplocalalgo vs. the non-DP HCTs produced by \localalgo?

\textbf{(EQ2)} What is the quality of the private HCTs produced by \dplocalalgo vs. the centralized DP algorithm \globalalgo?

\begin{figure*}
  \begin{tikzpicture}
   \begin{groupplot}[group style = {group size = 3 by 1, horizontal sep = 50pt},
     width=1/3.5*\textwidth,
     height = 4cm]
  \nextgroupplot[title={\lastfm},
    legend style={column sep = 2pt, legend columns = -1, legend to name = grouplegend,},
    xlabel={steps/$n$},
    ylabel={$\log{\mle}$},
    every axis plot/.append style={thick}
    ]
    \addplot[mark=*,mark phase=500,mark repeat=600,
    mark options={fill=brown,draw=black,scale=1.4},color=chocolate(traditional)] table [x=steps/n, y=loss,col sep=comma]
      {figures/data-rec/mcmc_losses/lastfm-small_05_mcmc_losses.csv};
    \addlegendentry{LDP,$\epsilon=0.5$}
    \addplot[mark=diamond*, mark phase=500, mark repeat=600,mark
    options={fill=blue,draw=black,scale=1.4},color=blue] table [x=steps/n, y=loss,col sep=comma]
      {figures/data-rec/mcmc_losses/lastfm-small_10_mcmc_losses.csv};
    \addlegendentry{LDP,$\epsilon=1.0$}
    \addplot[mark=triangle*,mark phase=500,mark repeat=600,mark
    options={fill=red,draw=black,scale=1.4},color=red] table [x=steps/n, y=loss,col sep=comma]
      {figures/data-rec/mcmc_losses/lastfm-small_20_mcmc_losses.csv};
    \addlegendentry{LDP,$\epsilon=2.0$}

    \addplot[mark=oplus*,mark phase=500,mark repeat=600,
    mark options={fill=magenta,draw=black,scale=1.4},color=magenta] table [x=steps/n, y=loss,col sep=comma]
      {figures/data-rec/mcmc_losses/global_lastfm-small_05_mcmc_losses.csv};
    \addlegendentry{DP,$\epsilon=0.5$}
    \addplot[mark=square*,mark phase=500,mark repeat=1000,mark
    options={fill=amethyst,draw=black,scale=1.4},color=amethyst] table [x=steps/n, y=loss,col sep=comma]
      {figures/data-rec/mcmc_losses/global_lastfm-small_10_mcmc_losses.csv};
    \addlegendentry{DP,$\epsilon=1.0$}
    \addplot[mark=pentagon*,mark phase=500, mark repeat=1000,mark
    options={fill=azure(colorwheel),draw=black,scale=1.4},color=azure(colorwheel)] table [x=steps/n, y=loss,col sep=comma]
      {figures/data-rec/mcmc_losses/global_lastfm-small_20_mcmc_losses.csv};
    \addlegendentry{DP,$\epsilon=2.0$}

\nextgroupplot[title=\delicious,
  xlabel={steps/$n$},
  ylabel={$\log{\mle)}$},
  every axis plot/.append style={thick}
  ]
  \addplot[mark=*, mark phase=500, mark repeat=600,
    mark options={fill=brown,draw=black,scale=1.4},color=chocolate(traditional)] table [x=steps/n, y=loss,col sep=comma]
    {figures/data-rec/mcmc_losses/delicious_05_mcmc_losses.csv};
  \addplot[mark=diamond*, mark phase=500, mark repeat=600,mark
  options={fill=blue,draw=black,scale=1.4},color=blue] table [x=steps/n, y=loss,col sep=comma]
    {figures/data-rec/mcmc_losses/delicious_10_mcmc_losses.csv};
  \addplot[mark=triangle*, mark phase=500, mark repeat=600,mark
  options={fill=red,draw=black,scale=1.4},color=red] table [x=steps/n, y=loss,col sep=comma]
    {figures/data-rec/mcmc_losses/delicious_20_mcmc_losses.csv};

  \addplot[mark=oplus*,mark phase=500, mark repeat=600,mark
  options={fill=magenta,draw=black,scale=1.4},color=magenta] table [x=steps/n, y=loss,col sep=comma]
    {figures/data-rec/mcmc_losses/global_delicious_05_mcmc_losses.csv};
  \addplot[mark=square*,mark phase=500, mark repeat=1000,
  mark options={fill=amethyst,draw=black,scale=1.4},color=amethyst] table [x=steps/n, y=loss,col sep=comma]
    {figures/data-rec/mcmc_losses/global_delicious_10_mcmc_losses.csv};
  \addplot[mark=pentagon*,mark phase=500, mark repeat=1000,mark
  options={fill=azure(colorwheel),draw=black,scale=1.4},color=azure(colorwheel)] table [x=steps/n, y=loss,col sep=comma]
    {figures/data-rec/mcmc_losses/global_delicious_20_mcmc_losses.csv};
\nextgroupplot[title=\douban,
  xlabel={steps/$n$},
  ylabel={$\log{\mle)}$},
  every axis plot/.append style={thick}
  ]
  \addplot[mark=*, mark phase=500, mark repeat=600,
    mark options={fill=brown,draw=black,scale=1.4},color=chocolate(traditional)] table [x=steps/n, y=loss,col sep=comma]
    {figures/data-rec/mcmc_losses/douban_05_mcmc_losses.csv};
  \addplot[mark=diamond*, mark phase=500, mark repeat=600,mark
  options={fill=blue,draw=black,scale=1.4},color=blue] table [x=steps/n, y=loss,col sep=comma]
    {figures/data-rec/mcmc_losses/douban_10_mcmc_losses.csv};
  \addplot[mark=triangle*,mark phase=500,mark repeat=600,
  mark options={fill=red,draw=black},color=red] table [x=steps/n, y=loss,col sep=comma]
    {figures/data-rec/mcmc_losses/douban_20_mcmc_losses.csv};

  \addplot[mark=oplus*,mark phase=500, mark repeat=600,mark
  options={fill=magenta,draw=black,scale=1.4},color=magenta] table [x=steps/n, y=loss,col sep=comma]
    {figures/data-rec/mcmc_losses/global_douban_05_mcmc_losses.csv};
  \addplot[mark=square*,mark phase=500,mark repeat=600,
  mark options={fill=amethyst,draw=black,scale=1.4},color=amethyst] table [x=steps/n, y=loss,col sep=comma]
    {figures/data-rec/mcmc_losses/global_douban_10_mcmc_losses.csv};
  \addplot[mark=pentagon*,mark phase=500,mark repeat=1000,mark
  options={fill=azure(colorwheel),draw=black,scale=1.4},color=azure(colorwheel)] table [x=steps/n, y=loss,col sep=comma]
    {figures/data-rec/mcmc_losses/global_douban_20_mcmc_losses.csv};
\end{groupplot}
\node at ($(group c2r1) + (0,-2.5cm)$) {\ref{grouplegend}};
\end{tikzpicture}

\caption{The \dplocalalgo (marked as LDP) and the \globalalgo (marked as DP) evaluated with
  $\log{\mle}$ at each step of the MCMC process for $3$ values of $\epsilon\in\{0.5, 1.0, 2.0\}$.
  Higher values of $\log{\mle}$ signify smaller loss, so better trees.  \dplocalalgo produces trees
  with a better $\mle$ cost than the baseline \globalalgo for all values. We also observe that \dplocalalgo
converges within $400$ MCMC steps.}
  \label{fig:mcmc_loss}
\end{figure*}
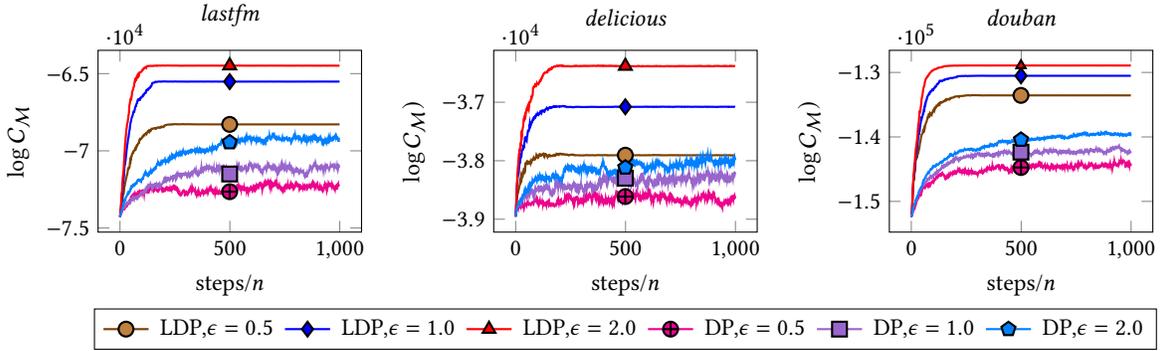

\begin{table}[tp]
\caption{Dataset graph statistics}
\label{tab:datasets}
\centering
\resizebox{0.9\linewidth}{!}{%
\begin{tabular}{|c|c|c|c|c|}
\hline
\textbf{Network} & \textbf{Domain} & \textbf{\#Nodes} & \textbf{\#Edges} & \textbf{Density} \\ \hline
\lastfm & social graph & 1843 & 12668 & 0.0075 \\ \hline
\delicious & social graph & 1503 & 6350 & 0.0056 \\ \hline
\douban & social graph & 2848 & 25185 & 0.0062 \\ \hline
\end{tabular}%
}
\end{table}

\begin{table}[t]
\caption{The quality of HCT evaluated using Dasgupta cost function. The utility loss is at most
\newtext{$10.87\%$} for a small $\epsilon=0.5$. }
\label{tab:dasgupta-loss-emp}
\begin{center}
\resizebox{0.35\textwidth}{!}{%
\begin{tabular}{cccc}
\textbf{Network} & \textbf{$\epsilon$} & 
\multicolumn{1}{c}{\textbf{\begin{tabular}[c]{@{}c@{}}Relative \\ Utility \end{tabular}}} & 
\multicolumn{1}{c}{\textbf{\begin{tabular}[c]{@{}c@{}}Empirical \\ Utility Loss (LDP) \end{tabular}}} \\
\hline
\multirow{3}{*}{\lastfm} & 0.5 & \multirow{3}{*}{\newtext{22.82}} & \newtext{9.57} \\
 & 1.0 &  & \newtext{4.05} \\
 & 2.0 &  & \newtext{1.45} \\ \hline
\multirow{3}{*}{\delicious} & 0.5 & \multirow{3}{*}{\newtext{12.20}} & \newtext{10.87} \\
 & 1.0 &  & \newtext{6.61} \\
 & 2.0 &  & \newtext{3.29} \\ \hline
\multirow{3}{*}{\douban} & 0.5 & \multirow{3}{*}{\newtext{28.83}} & \newtext{7.09} \\
 & 1.0 &  & \newtext{3.27} \\
 & 2.0 &  & \newtext{1.14} \\ \hline
\end{tabular}%
}
\end{center}
\end{table}

\label{sec:quality-hct-eval}

\paragraph{Experimental Setup}
We use $3$ real-world networks which are commonly used for evaluating recommender systems~\cite{wang2018collaborative,shepitsen2008personalized,ma2011recommender,qin2017CCS}~\footnote{https://grouplens.org/datasets/hetrec-2011/}.
We detail the networks in Table~\ref{tab:datasets}.
\newtext{We use privacy budget values of $\epsilon\in\{0.5, 1.0, 2.0\}$ as they have been used in the prior work for both centralized as well as the local differential privacy setup~\cite{hsu2014differential,qian2014KDD,abadi2016deep}.}
We have two tunable parameters: the number of bins $\numbins$ and the convergence criteria for MCMC in \localalgo. The $\numbins$ is chosen such that it minimizes the noise in the final degree vector as well as it minimizes the number of collisions in the degree vector calculation. A higher $\numbins$ minimizes the collisions but incurs more noise on aggregate and vice versa. Previous work has shown that the value of $\numbins$ is usually small and it depends on the graph structure. 
\newtext{We therefore heuristically choose $\numbins=\lfloor \log{n}\rfloor$ as it is not too low for not capturing any edge information (see Section~\ref{sec:approach}) and not too high so that we might end up adding a lot of noise (see Section~\ref{sec:prob_setup}). If we substitute this $K$ value in our utility bound for \dplocalalgo in Theorem~\ref{thm:t8}, we get a loss scaling with $\frac{\log{n}}{n\cdot\epsilon}$. Further, our values of $K$ agree with the ones used in the prior work~\cite{qin2017CCS}}. 
We use a convergence criteria that is similar to the ones used by prior works in the centralized setting for their MCMC algorithm~\cite{qian2014KDD,clauset2008Nature}.

\paragraph{Quality of LDP vs. non-DP}
We show that the observed cost of the differentially private tree generated by \dplocalalgo, for all values of $\epsilon$ is very close to the cost of the tree generated by the non-DP version of the algorithm \localalgo. 

We define the {\em empirical utility loss} as $\frac{|\dasgupta(T_{dp})-\dasgupta(T)|}{\dasgupta(T)}$, where $\dasgupta(T_{dp})$ is the cost of the private HCT and $\dasgupta(T)$ is the cost the non-private tree.
Recall that the cost of least optimal tree $\rho$ corresponds to that the tree obtained from a clique. Thus, while we cannot compute the optimal tree, we can compute its {\em relative utility} with respect to the cost $\rho$ as $\frac{\dasgupta(T)}{\rho}$. The higher this value, the better the non-private tree is.
In Table~\ref{tab:dasgupta-loss-emp}, we observe that the empirical utility loss is less than
\newtext{$10.87\%$} for all networks and values of $\epsilon$ we evaluated. The empirical utility loss is
approximately $0.08-1.67\%$ on average for $\epsilon=2.0$. This confirms that empirically
\dplocalalgo generates trees with a small loss of utility.
The empirical loss is within the provided conservative theoretical bounds, but, tightening the
theoretical bounds remains promising future work.

\custombox{1}{The empirical utility loss between the DP and non-DP version is less
than \newtext{$10.87\%$} for all the graphs and $\epsilon$ values. }

\paragraph{Quality of LDP HCTs vs centralized DP HCTs}
Our main result is that our LDP algorithm produces trees that have a better utility cost than the ones produced by the centralized DP algorithm as measured by the baseline $\mle$ cost.
Both our non-DP and LDP algorithms generate trees while optimizing for a different Dasgupta's cost function. 
In Figure~\ref{fig:mcmc_loss}, we show that the $\log{\mle}$ of the trees generated by \dplocalalgo
have better utility than trees generated by \globalalgo for all values of $\epsilon \in
\{0.5,1.0,2.0\}$ across all $3$ graphs.

\custombox{2}{The $\mle$ cost of the trees generated by our \dplocalalgo is better than the baseline
\globalalgo for all evaluated graphs and $\epsilon$ values.}

\section{Application: Recommender Systems}
\label{sec:case-study}

We revisit the motivating example of social recommendation from
Section~\ref{sec:problem} and ask ``Can \tool be utilized for providing
recommendations in the federated setup?''
The goal is to accurately predict the top-$k$ recommendations for the target
users. 
The target users share their preferences/tastes with the recommender system but
do not want to share their private contacts. Therefore, this naturally falls in
the edge-LDP setup.

We consider two scenarios: cold-start and existing users. Recall that the cold-start scenario occurs when a new user with no history of interactions with the service provider has to be served recommendations. In contrast, existing users have a history of interactions that convey their interests.
Specifically, we evaluate the utility of the \tool algorithm for social
recommendations by asking the following questions:

\textbf{(EQ1)} For cold-start, are the top-$k$ recommendations with
    \tool-based algorithm better than the non-social baseline?

\textbf{(EQ2)} For existing users, can we improve the top-$k$ recommendations over the
    non-social baselines with \tool-based technique?

\paragraph{Datasets} We use the same datasets mentioned in Section~\ref{sec:eval} as they are commonly used for social
recommendations.
For \lastfm and \delicious which are implicit feedback ratings we normalize the
weight of each user-artist and user-bookmark link by dividing the maximum number
of plays/bookmarks of that user across all artists/bookmarks. For \douban with
explicit feedback ratings we normalize the ratings by dividing with the maximum
rating ($5.0$).

\paragraph{Experimental Setup} We split the datasets in training and test sets.
We perform k-fold cross-validation for $k=5$ splits of the datasets.
For the HCT computation, we use a privacy budget of $\epsilon=1$ and we average
results over HCTs computed for $3$ random seeds.

\paragraph{Evaluation Metrics}
To evaluate our recommendations we use mean average precision (MAP) and normalized
discounted cumulative gain (NDCG). These two metrics are widely used to evaluate
the quality of the top-$k$ recommendations. Higher values of these metrics imply better recommendations. Due to space constraints, we explain
how these are computed in Appendix~\ref{sec:eval-metrics}.

Next, we describe our baseline and \tool-based algorithm,
and then present our results on our social recommendation datasets.

\subsection{Collaborative Filtering}
We implement a widely used technique for recommender systems called
collaborative filtering (CF)~\cite{netflixusesit,resnick1994grouplens}.
In particular, we consider memory-based CF techniques that operate on
user-item rating matrices to make top-$k$ recommendations.
We choose this technique because it generally yields good performance,
forms the basis of real-world recommender systems such as
Netflix~\cite{netflixusesit}, Facebook~\cite{fbusesit}.

\paragraph{CF for cold start}
In this scenario, the target users have no history of rated items. We choose two popular strategies, ``recommend highest rated items'' (\itemAvg) and ``recommend highest rated items among your friends'' (\friendsCF). Notice that these two methods correspond to two different granularities of the ``close'' community. \itemAvg considers the entire network where as the \friendsCF takes only the neighbors.

For \friendsCF, if $S:U\times U \rightarrow\{0,1\}$ is the social relation function and $S(u', u)=1$ if the
users $(u, u')$ are friends and $0$ otherwise, then the rating $r_{u,i}$ is computed as follows:
\begin{align*}
  r_{u,i} = \overline{r}_i + \frac{\sum_{u'\in \mathcal{U}} S(u',u)
  (r_{u',i}-\overline{r}_{u'})}{\sum_{u' \in \mathcal{U}}S(u',u)}
\end{align*}
where $\overline{r}_i=\frac{\sum_{u' \in \mathcal{U}} r_{u',i}}{|\mathcal{U}|}$
is the average rating of item $i$,
$\overline{r}_{u'} = \frac{\sum_{i \in \mathcal{I}} r_{u',i}}{|\mathcal{I}|}$ represents the
average item rating for user $u'$,
$\mathcal{I}$
is the set of items with a rating
from user $u'$ and
$\mathcal{U}\subseteq U$
is the set of users that have a
rating for item $i$. For \itemAvg, we just replace $S(u', u)$ with $1$ for all $u'\in \mathcal{U}$. Note that \itemAvg is a private baseline as no one shares their social relations where as \friendsCF is a non-private baseline.

\paragraph{CF for existing users}
When the user is present in the platform, we can leverage both their ratings {\em and} social
connections to make better recommendations.  In this case, we use one of the state-of-the-art social
recommendation algorithm called \serec~\cite{wang2018collaborative}.
\serec first models the user exposures to the items and then uses these exposures to guide the
recommendation. The exposures are computed based on the social information rather than the rating history.
Due to space constraints, we refer the reader to the paper for more details.
We use a popular Python implementation\footnote{https://github.com/Coder-Yu/RecQ} which follows the original implementation.
This represents the non-private baseline for social recommendations for existing users.
\subsection{Collaborative Filtering with \dplocalalgo}

The close community to which a target user belongs can be queried from an HCT which is typically the clusters at the lower levels of the HCT. We can control the granularity of ``closeness'' by the popular query ``Who are the $m$~\footnote{popularly known as $k$ nearest neighbors. We use $m$ because $k$ is used elsewhere. } nearest neighbors for user $u_i$?''. This can be answered with a set $N(u_i):(n_1, n_2, ..., n_m)$ where $n_1$ belongs to the closest community of $u_i$, $n_2,...,n_{j<m}$ in the second closest community and so on. This query forms the main insight for recommendations based on \dplocalalgo.

We propose a simple strategy to replace the privacy invasive algorithms that depend on direct social relations into a differentially private one. Specifically, we query the \dplocalalgo-based HCT to give the $m$ nearest neighbors to the target user $u_i$ and use them for recommendations.
We use this strategy to replace our baselines with and without cold-start.

\begin{enumerate}
    \item For \friendsCF, we select $m$ closest users $N(u_i)$ for each target user $u_i$ based on HCT and we employ the formula used by \friendsCF with $S(u',u)=1.0$ when $u'\in N(u_i)$.
    \item For SERec, we do the same as above i.e., replace the immediate social contact with the closest users in HCT and feed it to the algorithm.
\end{enumerate}

For both the above strategies we choose $m$ as the degree of the target user $u_i$ in order to consider similar number of neighbors to the baseline strategies to be fair. Recall that the degree of a user can be estimated from its degree vector we computed for \dplocalalgo. However, in the real world $m$ is application dependent and might require domain expertise or online testing to estimate an ideal value. Note that our insights can be used in any algorithm that depends on social contacts for recommendations. Furthermore, recommendation algorithms can be designed using the private HCT with the availability of additional information such as node attributes. We leave that for future work.

We are aware of only one prior work that aims to design social recommendation algorithms in the LDP setup. LDPGen, which appeared at CCS 2017, uses differentially private synthetic graphs for recommendations~\cite{qin2017CCS}. It uses the direct graph edge information in the synthetically generated graphs for recommendations, unlike our work. The data structure they compute (synthetic graph) is fundamentally different from  ours (hierarchical cluster tree) and therefore, their recommendation algorithm as well. 
However, a direct comparison to LDPGen is not possible. LDPGen does not accompany any publicly available implementation. After our several attempts to implement it faithfully, we were unable to reproduce findings reported therein. Furthermore, the presented theoretical analysis in the LDPGen paper has flaws, which were confirmed in private communication with one of the authors of LDPGen. We detail both our experimental effort and observed theoretical inconsistencies of that work in Appendix~\ref{sec:ldpgen}.

\subsection{Performance of \dplocalalgo-based CF}
\label{sec:case-study-eval}
\begin{table}[]
\caption{Results for the two cold start setup. \privaCTCF is a \newtext{$1-390\times$} better than the \itemAvg which shows the utility of using community information from the private HCT.}
\begin{tabular}{llccc}
                            &  Top-$100$    & \textbf{\itemAvg} & \textbf{\friendsCF} &
                            \textbf{\privaCTCF} \\ \hline
\multirow{2}{*}{\lastfm}    & NDCG & \newtext{5.33E-04}         & \newtext{1.58E-01}           & \newtext{6.21E-02}           \\ \cline{2-5} 
                            & MAP  & \newtext{2.56E-05}         & \newtext{4.03E-02}           & \newtext{8.63E-03}          \\ \hline
\multirow{2}{*}{\delicious} & NDCG & \newtext{8.80E-04}         & \newtext{6.70E-03}           & \newtext{9.00E-04}           \\ \cline{2-5} 
                            & MAP  & \newtext{7.18E-05}         & \newtext{8.00E-04}           & \newtext{6.59E-05}           \\ \hline
\multirow{2}{*}{\douban}    & NDCG & 2.00E-04         & 7.49E-02           & \newtext{7.800E-02}           \\ \cline{2-5} 
                            & MAP  & 1.46E-05         & 1.87E-02           & \newtext{2.000E-02}          \\ \hline
\end{tabular}
\label{tab:cold_start}
\end{table}

Our main finding is that in both evaluated scenarios, cold start and existing users, our \tool-based algorithms can be used for social recommendations with almost no utility loss for a privacy budget $\epsilon=1$. 
In both scenarios for all evaluated networks, \tool-based algorithms perform  \newtext{$0.9-390\times$} better \newtext{(as per NDCG)} than the \newtext{{\em non-social} recommendation baselines where recommendations are computed without social information}.
Compared to the non-private baselines \newtext{that do use the social links}, \tool is either on par with state-of-the art algorithm (\serec) in the existing users case or no worse than \newtext{$7.44\times$} for cold start \newtext{(for NDCG, or $12.1\times$ for MAP).}

For cold start CF, \privaCTCF outperforms the private baseline \itemAvg by
\newtext{$1.1-390\times$} across all datasets for NDCG and \newtext{$0.9-1.35\cdot10^{3}\times$} for MAP (Table~\ref{tab:cold_start}).
The reason is that the close communities extracted by \tool through the HCT are
a good proxy of similarities between users, whereas \itemAvg is a simple average
over the items.
For \lastfm, the non-private \friendsCF is \newtext{$2.54\times$} better in NDCG score than \privaCTCF and,
respectively, \newtext{$7.44\times$} better for \delicious. \newtext{Similarly, for \lastfm, the non-private \friendsCF is \newtext{$4.67\times$} better in MAP score than \privaCTCF and,
respectively, \newtext{$12.1\times$} better for \delicious.
For \douban, \privaCTCF is \newtext{$1.1\times$} better than the
\friendsCF in NDCG score, and \newtext{$1.1\times$} in MAP}.

\custombox{3}{For cold-start users,  \privaCTCF's recommendations are better than the \newtext{non-social} baseline for all networks across all metrics.}

\begin{table}[]
\caption{\privact-\serec is \newtext{$2.16-17.95\times$} better than the non-social baseline (Basic \serec) and comparable to the non-private baseline \serec.
}
\begin{tabular}{clccc}
                            &  Top-$100$    & \textbf{Basic \serec} & \textbf{\serec} &
                            \textbf{\privact-\serec} \\ \hline
\multirow{2}{*}{\lastfm}     & NDCG & 0.1223                   & 0.3249                 & \newtext{0.3270}                  \\ \cline{2-5} 
                            & MAP  & 0.0240                   & 0.1405                 & \newtext{0.1419}                \\ \hline
\multirow{2}{*}{\delicious} & NDCG & 0.0058                   & 0.0135                 & \newtext{0.0150}                  \\ \cline{2-5} 
                            & MAP  & 0.0021                   & 0.0048                 & \newtext{0.0045}                  \\ \hline
\multirow{2}{*}{\douban}    & NDCG & 0.0349                   & 0.2466                 & \newtext{0.2469}                  \\ \cline{2-5}
                            & MAP  & 0.0052                   & 0.0929                 & \newtext{0.0930}                  \\ \hline
\end{tabular}
\label{tab:serec}
\end{table}

For existing users, \privact-\serec is performing on par with the non-private
state-of-the-art algorithm \serec for top-$100$ recommendations for all $3$
datasets (see Table~\ref{tab:serec}). These results show that the similarity
between users as captured by the communities of our private hierarchical
clustering are a good indicator for trust within the direct connections in the
social network. \privact-\serec has better \newtext{$2.67-7.08\times$} predictions than the \serec \newtext{without social information in NDCG score, and, respectively \newtext{$2.16-17.95\times$} better in MAP}. \serec (with social information) has similar performance to \privact-\serec. 

\custombox{4}{For existing users, \privact-\serec has \newtext{$2.16-17.95\times$} better
predictions than the \newtext{non-social} \serec for all evaluated graphs across all metrics.
\privact-\serec has similar performance to \serec.}
\subsection{Why Does \privaCTCF Work Well?}
\label{sec:eval-observations}

Recall that a user is influenced by other users based on the distance between them on the network~\cite{golbeck2005computing,hang2008operators}.
Thus, for our first experiment, we measure the correlation between the similarity of users' rating profile and the distance between them on the network.
We find a negative correlation between these two in all of our datasets, i.e., similar users tend to be closer.
We next evaluate if a similar correlation can be observed in the top-$100$ recommendations produced by \privaCTCF. Specifically, we ask whether higher NDCG scores are correlated to shorter distances between the target user and users in their close community.
We observe that for $2$ of our datasets, there is, indeed, a negative correlation.
This means that for the top items, the HCT-generated close community mostly consists of users that are at a shorter distance on the network. In contrast, the community corresponding to less relevant recommendations, has larger distances among its users.
We next detail how we perform these experiments.

\paragraph{User Interests vs. Distance Correlation.}
There are two ways to measure the similarity between users. First, we consider the
partial profile of users, i.e., we consider only the items that have been rated by the users.
Second, we complete the rating profiles by giving a
$"0.0"$
rating to the non-rated ones.
This captures the fact that similar users not only have similar ratings for specific items but also rate the same items.
Finally, we compute the correlation between the shortest path distances and the similarities computed between the users.
All correlations are measured using the Pearson correlation.
The results are summarized in the Appendix~\ref{sec:appobs}, Table~\ref{tab:corr}.
For partial profiles, we observe that
the correlation between shortest path distances and user similarity is negative for $77\%, 96\%$ and
$95\%$ of the users in the three datasets \lastfm, \delicious and \douban.
For full rating profiles, the correlation is negative for all users.
This is an indication that as the distance increases between users, their similarity decreases on all our datasets, confirming our intuition that users are influenced based on distance on the network.

\paragraph{NDCG Scores vs. Close Communities}
Our goal is to measure whether relevant items that are ranked higher for a target user are recommended by other users that are at shorter distances from the target user on average. Recall that in the cold start scenario,  we choose $m$ users as the close community, where $m$ is the degree of the target user.
To measure how the shortest paths vary across the top-$100$ predictions using \privaCTCF, we first compute the shortest path between the target user and other users in their $m$-sized community who have rated the same items.
Thus, for each top-$100$ item of a particular target user, we obtain a corresponding average shortest path between users who have rated it.
We construct the {\em shortest path profile} for each user as an array that contains the average shortest paths corresponding top-$100$ rated items. 
Finally, we compute the Pearson's correlation between the NDCG scores of all users and their corresponding mean of the shortest path profile for the top-$100$ recommended items.
For \lastfm and \delicious we find that the correlations for almost all seeds are negative and for \douban the correlations are positive. For instance, in \lastfm we find that the correlations range from $[-0.31,-0.09]$. For \delicious, the negative correlations range is $[-0.21,-0.006]$ with one positive outlier of $0.17$. In \douban's case, we observe that the users at distances of $3$ are as similar to the target user as the users at distances of $1$ unlike in the other two networks. This allows many users at distances of $2-3$, as suggested by \privaCTCF, to contribute more to the NDCG of the target users thus leading to a positive correlation.

\section{Related Work}
\label{sec:related}

Modelling networks as hierarchical clusters has been widely-studied for more than $70$ years~\cite{florek1951liaison,sibson1973slink}.
The algorithms to mine hierarchical clusters
can be classified into two groups 1) agglomerative/divisive algorithms based
on node similarities, 2) algorithms optimizing cost functions based on
probabilistic modelling of the graph. The first group consists of traditional
algorithms such as single, average and complete linkage, given in the book
Pattern Classification~\cite{duda_hart_stork}. Ward's algorithm
 optimizes for minimum variance of cluster similarities
 ~\cite{ward1963hierarchical}. While these algorithms are simple, they are considered to be ad-hoc and there is no analytical way to compare them. We refer the reader to this survey~\cite{murtagh2012DMKD}.

Unlike the ad-hoc algorithms,
more systematic ones been proposed that optimize certain cost functions. The
examples include, Hierarchical Random Graph model~\cite{clauset2008Nature,clauset2007HRG} and more recently, the Hierarchical Stochastic Block
Model (HSBM)~\cite{lyzinski2015HSBM}. All the aforementioned algorithms are not designed with privacy as a concern and require access to the complete graph for functioning.

Consequently, plenty of works try to address the privacy problem in graphs.
These works~\cite{nissim2007SmoothSensitivity,day2016SIGMOD,cormode2015PGStatsLadder,kasiviswanathan2013NodeDPDegree} release degree distributions,  minimum
spanning trees and subgraph counts under edge/node differential
privacy. More works generated synthetic graphs privately by using the  graph
statistics such as degree counts or edge information as inputs~\cite{wentian14KDD,qian2014KDD,wangyue2013KDD,mir2012EDBT,karwa2012LNCS}. All of them, however, work in the trusted central-aggregator model.

In the LDP setup, many works
have been proposed for frequency estimation and heavy-hitters for categorical
or set-valued data~\cite{bassily2015LDPHist,duchi2023MinimaxRates,qin2016CCS,hsu2012LDPHeavyHitters,rappor2014CCS,fanti2016RapporExt,xiong2016RandomizedRequantization, gu2019LDPRange}. Rappor~\cite{rappor2014CCS} has been used by Google in the past
to collect app statistics. Fanti et al. extends this model to a situation when the
the number of categories is not known beforehand~\cite{fanti2016RapporExt}.
Qin et al. gives a multi round approach like ours to mine heavy hitters from
set-valued data~\cite{qin2016CCS}. The mechanisms are extended to numerical
values and multi-dimensional data in~\cite{wang2019ICDE}. Further, Ye et al.
presents the first frequency and mean estimation algorithms for key-valued
data~\cite{ye2019KeyValueLDP}. Bassily has a more general setup
of estimating linear queries~\cite{bassily2019LDPLinearqueries}. Our work shares the idea
of using minimal tools at our disposal with them.

LDP mechanisms for graphs are less explored. Qin et
al.~\cite{qin2017CCS} proposes a multi round decentralized synthetic graph
generation technique. They use degree vectors to flat cluster users and
then generate edges between users based on their cluster assignments. Our work takes their degree vector insight, however, there are important differences in both problem setup and hence, the techniques used. Our problem setup is different wherein we mine hierarchical cluster trees as opposed to flat clustering of nodes. HCT requires clustering at multiple levels i.e., each node is assigned to many hierarchical clusters, multiple rounds of refining flat clusters cannot be directly applied in our setup. The challenges in learning HCT are unique and require a principle design that builds on some existing insights and many new ones that we talked about in Section~\ref{sec:approach}.

\localalgo differs from other algorithms that have been proposed after the introduction of Dasgupta's cost function. All of them are greedy approaches and some of them require the edge information of the graph. For similarity based clustering, Dasgupta proposed a greedy sparsest-cut based algorithm with $O(cn\log(n))$ approximation. Charikar and Chatziafratis have improved it to a $O(\sqrt{logn})$ approximation by using SDP. Addad et al. summarizes the algorithms and shows that average linkage achieves a $0.5$ approximation dissimilarity setting~\cite{addad2018SIAM}, coinciding with the findings of~\cite{moseley2017NIPS}. In comparison, although slower than greedy,  \localalgo achieves better approximation $1 - O(\frac{\log(n)}{n^2-1})$ in the dissimilarity setting due to its MCMC design. Further, none of the other works have been designed with a privacy objective and it is not straight-forward to argue about their utility under differential privacy (e.g., using degree vectors). In contrast, we propose a privacy-aware and easy to implement algorithm that performs better than the existing greedy approaches. Our design also allows for analyzing its utility guarantees under differential privacy. In fact, our final algorithm turns out to have comparable simplicity to the popular algorithm implemented by open-source libraries, e.g., R packages~\cite{hrgRpackage}.

Social recommender systems are popular and have been studied for over two decades~\cite{basu1998socialrecommendation,tang2013social}. The correlation between social contacts and their influence on user's interests has been observed and theoretically modelled in the real world networks~\cite{golbeck2005computing,singla2008yes,panigrahy2012user,hang2008operators}. When available, the most popular social recommendation algorithms combine both item ratings data as well as social data for collaborative filtering~\cite{benbasat2005trustsocial,jamali2009trustwalker}. The collaborative filtering techniques range from matrix decomposition based~\cite{jiang2012social,guo2015trustsvd,jamali2010matrixsocial,yang2016social,wang2018collaborative} to deep learning based~\cite{fan2019socialgraph}. All of these algorithms however require exact social graph.
Differential private matrix factorization for recommender systems is popular in the centralized setup~\cite{mcsherry2009differentially,zhu2013privaterecommendation}. Recently, this has been studied in the LDP setup as well~\cite{shin2018privaterecommendation,truex2020lprivaterecommendation}. All these techniques have a different privacy setup where in they preserve the privacy of ratings of the user. Further, they do not have any social component associated with them. Instead, in our work we do social recommendation while preserving the privacy of the inherent social network rather than user ratings.
\section{Conclusion \& Future Work}
\label{sec:conclusion}
With millions of users moving towards federated services and social networks, studying how to enable existing analytics applications on these platforms is a new challenge. In this work, we provide the first algorithm to learn a differentially private hierarchical clustering of in a federated network without a trusted authority.
Our approach is principled and follows theoretical analysis which explains why our design choices expect to yield good utility. We apply our new algorithms in social recommendation for the federated setup, replacing privacy-invasive ones, and show promising results. We hope our work encourages future work on supporting richer queries and the full gamut of conventional analytics for federated networks with no trusted co-ordinators. Enabling queries on graphs with private node attributes and on interest graphs are promising next steps.


\appendix
\section{Appendix}
\label{sec:appendix}

\subsection{Proofs}
\label{sec:proofs}

In this section, we give the detailed proofs for the theorems and lemma presented throughout the paper.
\thma*
\begin{proof}

We will prove the claim by contradiction.

Let $\opttree$ maximize $\dasgupta$; assume $\inode_1 \& \inode_2$ be two
internal nodes where $\inode_2 \in \ancestors(\inode_1)$ such that the
$\theta(\inode_1)>\theta(\inode_2)$.

Then, the three configurations of a tree with $\inode_1, \inode_2$ are shown
in the Figure~\ref{fig:localswap}. If configuration $L$ is $\opttree$ then one
of the other two configurations, $M$ or $N$ will have a higher cost than
$\opttree$. Let $|A| = a$, $|B| = b$ and $|C| = c$.

\begin{align}
\text{Given} &,\\
			 &\theta(\inode_2)<\theta(\inode_1)\\
\text{i.e., }&\frac{\sumweights{x}{z}+\sumweights{y}{z}}{(a+b)\cdot c} < \frac{\sumweights{x}{y}}{a\cdot b}\\
\implies     &(a\cdot c\sumweights{x}{y}-a\cdot b\sumweights{x}{z})\\
			 &+(b\cdot c\sumweights{x}{y}-a\cdot b\sumweights{y}{z}) > 0\numberthis\label{eqn1}\\
\end{align}
Next,
\begin{align*}
\dasgupta(M) - \dasgupta(L) &= (a+c)\cdot|\sumweights{x}{z}|\nonumber\\
						    &+ (a+b+c)\cdot|\sumweights{x}{y}+\sumweights{y}{z}|\nonumber\\
						    &- (a+b)\cdot|\sumweights{x}{y}|\nonumber\\
						    &- (a+b+c)\cdot|\sumweights{x}{z}+\sumweights{y}{z}|\nonumber\\
						    &= c\cdot\sumweights{x}{y} - b\cdot\sumweights{x}{z} \numberthis \label{eqn2}\\
\text{Similarly, }\nonumber\\
\dasgupta(N) - \dasgupta(L) &= c\cdot\sumweights{x}{y} - a\cdot\sumweights{y}{z} \numberthis \label{eqn3}\\
\end{align*}

Observe that \eqref{eqn1} implies at least one of \eqref{eqn2} and
\eqref{eqn3} has to be greater than $0$. Hence the contradiction.
\end{proof}

\lemb*
\begin{proof}

Let $\numnodes=|V|$, $\dasgupta(T)$ represent the cost of the sampled tree $T$, from the
stationary distribution and let $\numstates$ be the total number of possible
trees/states. The cost of the optimal tree $\opttree$ is $\optcost$.

\begin{align*}
Pr(\dasgupta(T)\leq m) &\leq \frac{Pr(\dasgupta(T)\leq m)}{Pr(\dasgupta(T)= \optcost)}\\
					   &\leq \frac{\numstates\cdot \frac{e^m}{\sum e^{\dasgupta(T)}}}{1\cdot \frac{e^{\optcost}}{\sum e^{\dasgupta(T)}}} \\
					   &\leq \numstates\cdot e^{m-\optcost}
\end{align*}

Since at maximum only $\numstates$ states can have less  cost than $m$ and assume only one
state has $\optcost$. If we substitute $m = \optcost-\log(\numstates)-t$, then
\begin{align*}
Pr(\dasgupta(T)\leq \optcost - \log(\numstates)-t)&\leq \numstates\cdot e^{-\log(\numstates)-t}\\
												  &\leq e^{-t}\\
Pr(\dasgupta(T)\leq \optcost - c\cdot\numnodes\cdot\log(\numnodes)-t) &\leq e^{-t}\\
Pr(\dasgupta(T)\leq \optcost - c'\cdot\numnodes\cdot\log(\numnodes)) &\leq e^{-\numnodes\cdot\log(\numnodes)}\numberthis \label{eqn4}\\
\end{align*}

Observe that while $\dasgupta$ is a polynomial in $\numnodes$, the probability
of being far from $\optcost$ falls exponentially in $\numnodes$. Therefore, the expected loss in utility is $c'\cdot\numnodes\cdot\log(\numnodes)$ and the loss is heavily concentrated around its expectation.
\end{proof}

\leme*
\begin{proof}
    Let the degree vector of vertex $v_i$ be $\dv{i} = (c_i^1, c_i^2,\cdot{...},c_i^\numbins)$ and the noisy degree vector $\ndv{i} = (\bar{c}_i^1, \bar{c}_i^2,\cdot{...},\bar{c}_i^\numbins)$ such that $\bar{c}_i^l = c_i^1+\nrvi{i}{l}$. Similarly, for vertex $v_j$ $\dv{j} = (c_j^1, c_j^2,\cdot{...},c_j^\numbins)$ and $\ndv{j} = (\bar{c}_j^1, \bar{c}_j^2,\cdot{...},\bar{c}_j^\numbins)$. Recall that the dissimilarity between two vertices is computed as \normL of their degree vector counts
Let's bound the expected difference between original and noisy dissimilarities for two nodes $v_i, v_j$ i.e., $\Expn{R}{|\dpsimscore{v_i}{v_j} - \simscore{v_i}{v_j}|}$.
    \begin{align*}
        &\Expn{R}{\abs{\dpsimscore{v_i}{v_j} - \simscore{v_i}{v_j}}} =\\
        &\Expn{R}{\abs[\Big]{\sum_{l=1}^{l=\numbins}\abs{\ndc{i}{l}-\ndc{j}{l}}
                                                                  -
                                                              \sum_{l=1}^{l=\numbins}\abs{\dc{i}{l}-\dc{j}{l}}}}\\
        \leq&\Expn{R}{\sum_{l=1}^{l=\numbins}\abs{\nrvi{i}{l} - \nrvi{j}{l}}}\\
    \end{align*}
    Note that if $z = \nrvi{i}{l} - \nrvi{j}{l}$, then PDF of z, $f_{\nrvi{i}{l} - \nrvi{j}{l}}(z)$ is
    \begin{align*}
        f_{\nrvi{i}{l} - \nrvi{j}{l}}(z) = \frac{1}{4\cdot\sigma}[e^{\frac{-|z|}{\sigma}} + \frac{|z|}{\sigma}\cdot e^{\frac{-|z|}{\sigma}}]\\
        \therefore \Expn{|z|}{f_{\nrvi{i}{l} - \nrvi{j}{l}}(z)} = \frac{3\sigma}{2}
    \end{align*}

    Finally, we bound the difference between $\dpcost{T}$ and $\actcost{T}$
    \begin{align*}
        \abs[\Big]{\Expn{R}{\dpcost{T} - \actcost{T}}} &=\\ &\abs[\Big]{\sum_{i,j}(\dpsimscore{v_i}{v_j}-\simscore{v_i}{v_j})\cdot p_{i,j}} \\
        &\text{where }p_{i,j}=\abs{\leaves{T[i\lor j]}}\\
        &\leq\sum_{i,j}\abs[\Big]{\Expn{R}{\abs{\dpsimscore{v_i}{v_j}-\simscore{v_i}{v_j}}}\cdot p_{i,j}}\\
        &\leq \frac{3\numbins}{2\epsilon}\cdot\sum_{i,j}(1\cdot\abs{\leaves{T[i\lor j]}})\\
        &\leq \frac{3\numbins}{2\epsilon}\cdot\rho
    \end{align*}
\end{proof}

\lemf*
\begin{proof}
    Let $\numL{i}{j}$ denote $\abs{\leaves{T[i\lor j]}}$. The variance of $\dpcost{T}$ is given by,
    \begin{align*}
        \Var[\dpcost{T}] &= \Var[\sum_{i,j}{\dpsimscore{i}{j}}\cdot\numL{i}{j}]\\
        &=\numL{i}{j}^2\cdot \sum_{i,j}{\Var[\dpsimscore{i}{j}]}\\ &+ 2\sum_{i,j}\sum_{i,k:i\neq j\neq k}\Cov[\dpsimscore{i}{j}, \dpsimscore{i}{k}]\numberthis \label{eqn6}\\
    \end{align*}
    First let's look at $\Var[\dpsimscore{i}{j}]$,
    \begin{align*}
        \Var[\dpsimscore{i}{j}] &= \Var[\sum_{l=1}^{l=\numbins}\abs{\ndc{i}{l} - \ndc{j}{l}}]\\
    \end{align*}
    If $X_l = \abs{\ndc{i}{l} - \ndc{j}{l}}$ then observe that $X_l\forall l\in\{1,2\cdot{...},\numbins\}$ are independent.
    Therefore,
    \begin{align*}
        \Var[\dpsimscore{i}{j}] &= \numbins\cdot\Var[\abs{\ndc{i}{l} - \ndc{j}{l}}]\\ 
        &\text{for some }l\in\{1,2\cdot{...},\numbins\}\\
        &=\numbins\cdot\Var[|\dc{i}{l} - \dc{j}{l}+\nrvi{i}{l} - \nrvi{j}{l}|] \numberthis \label{eqn7}
    \end{align*}
    Fact, $\Var[\abs{X}] = \Var[X]+\Exp[X]^2 - \Exp[\abs{X}]^2$ for any random variable $X$. In our case $X = \abs{\dc{i}{l} - \dc{j}{l}+\nrvi{i}{l} - \nrvi{j}{l}}$ therefore, evaluating each of the terms in R.H.S
    \begin{align*}
        \Var[X] &= \Var[\dc{i}{l} - \dc{j}{l}+\nrvi{i}{l} - \nrvi{j}{l}]\\
                &= \Var[\nrvi{i}{l} - \nrvi{j}{l}]\\
                &= \Var[\nrvi{i}{l}]+\Var[\nrvi{i}{l}]\\
                &= 4\cdot\sigma^2 (\sigma=\frac{1}{\epsilon})\\
        \Exp[X]^2 &= (\dc{i}{l} - \dc{i}{j})^2 (\Exp[\nrvi{i}{l}]=0\forall i\in [n])\\
        \Exp[\abs{X}]^2 &\geq \Exp[X]^2\\
                        &\geq (\dc{i}{l} - \dc{i}{j})^2
        \therefore\\
        \Var[X] &\leq 4\cdot\sigma^2 \numberthis \label{eqn8}
    \end{align*}
    From Equations~\eqref{eqn7},\eqref{eqn8},
    \begin{align*}
        \Var[\dpsimscore{i}{j}]\leq 4\numbins\sigma^2 \numberthis \label{eqn20}
    \end{align*}
    Now let's bound the covariance term from Equation~\ref{eqn6} i.e., $\Cov[\dpsimscore{i}{j}, \dpsimscore{i}{k}]$. Similar to $X_l$ assume $Y_l = \abs{\ndc{i}{l} - \ndc{k}{l}}$
    \begin{align*}
        \dpsimscore{i}{j} &= \abs[\Big]{\sum_{l=1}^{l=\numbins}X_l}\\
        \dpsimscore{i}{k} &= \abs[\Big]{\sum_{l=1}^{l=\numbins}Y_l}
    \end{align*}
    Observe that $X_i, Y_j$ when $i\neq j$ are independent. Therefore,
    \begin{align*}
        \Cov[\dpsimscore{i}{j}, \dpsimscore{i}{k}] &= \\
        &\sum_{l=1}^{l=\numbins}\Cov[X_l, Y_l]\numberthis \label{eqn9}\\
    \end{align*}
    We will investigate $\Cov[X_l, Y_l]$ for some $l$,
    \begin{align*}
        \Cov[X_l, Y_l]&=\Cov[\abs{\ndc{i}{l} - \ndc{j}{l}}, \abs{\ndc{i}{l} - \ndc{k}{l}}]\\
        &=\Cov\left[\abs{\dc{i}{l}-\dc{j}{l}+\nrvi{i}{l}-\nrvi{j}{l}}\right.,\\
        & \left.\abs{\dc{i}{l}-\dc{k}{l}+\nrvi{i}{l}-\nrvi{k}{l}}\right]
    \end{align*}
    Observe that $\nrvi{i}{l}-\nrvi{j}{l}$, $\nrvi{i}{l}-\nrvi{k}{l}$ are identical random variables but not independent, therefore, we replace both of them with a random variable $X_{id}$.
    \begin{align*}
        \Cov[X_l, Y_l]&=\left[\abs{\dc{i}{l}-\dc{j}{l}+X_{id}},\right.\\
        & \left.\abs{\dc{i}{l}-\dc{k}{l}+X_{id}}\right]\\
        &=\left[\abs{a+X_{id}}, \abs{b+X_{id}}\right]\\
        &\text{replacing }a+X_{id}\text{ with }X'_{id}\numberthis \label{eqn10}\\
        &=\left[\abs{X'_{id}}, \abs{(b-a)+X'_{id}}\right]\\
    \end{align*}
    Fact, $\Cov\left[\abs{X}, \abs{a+X}\right]\leq \Cov\left[\abs{X}, \abs{X}\right]$.
    \begin{align*}
        \therefore,
        \Cov[X_l, Y_l]&\leq\left[\abs{X'_{id}}, \abs{X'_{id}}\right]\\
        &=\Var[\abs{X'_{id}}]\\
        &=\Var\left[\abs{a+X_{id}}\right]\text{ From~\ref{eqn10}}\\
        &=\Var\left[\abs{a+\nrvi{i}{l}-\nrvi{j}{l}}\right]\\
        &\leq 4\cdot\sigma^2\numberthis \label{eqn11}\text{ From Equation~\ref{eqn8}}\\
    \end{align*}
    Similar analysis works for all $l$, therefore, $\Cov[X_l, Y_l]\leq4\cdot\sigma^2 \forall l\in\{1,2,\cdot{...},\numbins\}$. Therefore, from Equation~\ref{eqn9},\ref{eqn11}
    \begin{align*}
         \Cov[\dpsimscore{i}{j}, \dpsimscore{i}{k} \leq 4\numbins\sigma^2 \numberthis \label{eqn12}
    \end{align*}
    From Equations~\ref{eqn6},\ref{eqn20},\ref{eqn12} this follows
    \begin{align*}
        \Var[\dpcost{T}]&\leq 4\numbins\sigma^2\cdot\left(\sum_{i,j}\numL{i}{j}^2+2\cdot\sum_{i,j}\sum_{i,k}\numL{i}{j}\numL{i}{k}\right)\\
        &\leq 4\numbins\sigma^2\cdot\left(\sum_{i,j}1\cdot\numL{i}{j}\right)^2\\
        &\leq 4\numbins\sigma^2\cdot\rho^2
    \end{align*}
    Hence Proved.
\end{proof}

\lemg*
\begin{proof}
    We know from Theorem~\ref{thm:t6},
    \begin{align*}
        \abs[\Big]{\Expn{R}{\dpcost{T} - \actcost{T}}}\leq \frac{3\numbins}{2\epsilon}\cdot\rho
    \end{align*}
    Also, we know a bound on variance of $\dpcost{T}$ from theorem~\ref{thm:t7}.
    \begin{align*}
         \Var[\dpcost{T}]&\leq 4\numbins\sigma^2\cdot\rho^2
    \end{align*}
    
    Applying Chebyshev's inequality,
    \begin{align*}
        Pr\left(\abs[\Big]{\dpcost{T}-\Expn{R}{\dpcost{T}}}\geq \delta\right)&\leq \frac{\Var[\dpcost{T}]}{\delta^2}\\
        \text{Substituting,   }\delta&=\frac{6\sqrt{\numbins}}{\epsilon}\cdot\rho\\
        Pr\left(\abs[\Big]{\dpcost{T}-\Expn{R}{\dpcost{T}}}\geq \frac{6\sqrt{\numbins}}{\epsilon}\cdot\rho\right) &\leq \frac{1}{9}\\
        \therefore \abs{\dpcost{T}-\actcost{T}}\leq\frac{3\numbins}{2\epsilon}\cdot\rho+\frac{6\sqrt{\numbins}}{\epsilon}\cdot\rho\text{ w.h.p }\numberthis\label{eqn21}
    \end{align*}
    Hence proved.
\end{proof}

\thmh*
\begin{proof}
    In the \dplocalalgo, the sampling is done based on the noisy probabilities $Pr(T_{dp}) = \frac{e^{\dpcost{T_{dp}}}}{\sum_{T'}e^{\dpcost{T'}}}$. In order to measure utility loss, the cost of sampled tree $T_{dp}$ is computed using $\actcost{T_{dp}}$. Let $T$ be a tree with $f(T)=m$ and the set of trees with cost less than $\actcost{T}$ are $T_{set}$ and $T''$ has the highest probability to be sampled among them(Note that, if we sampled using $\actcost{T_{dp}}$ then $T_{dp}$ would have the least cost and highest probability). If
    \begin{align*}
        P=\frac{2\numbins}{\epsilon}\cdot\left(\frac{3}{2}+\frac{6}{\sqrt{\numbins}}\right)\cdot\rho
    \end{align*}
    then,
    \begin{align*}
        Pr(\actcost{T_{dp}}\leq m)&\leq\frac{Pr(\actcost{T_{dp}}\leq m)}{Pr(\actcost{T_{dp}}=\optcost)}\\
        &\leq \abs{T_{set}}\cdot e^{\dpcost{T''}-\dpcost{\opttree}}\\
        &\leq \numstates\cdot e^{(\actcost{T''}+P) -(\actcost{\opttree} - P)} \\
        &\text{ w.h.p, From lemma~\ref{thm:t7}}\\
        &\leq \numstates\cdot e^{2P}\cdot e^{\actcost{T}-\optcost} \text{ replace T'' with T }\\
    \end{align*}
    Now, given lemma~\ref{thm:t7} w.h.p, substitute $m=\optcost-2P-2\cdot\log{\numstates}$
    \begin{align*}
        Pr(\actcost{T_{dp}}\leq \optcost-2P-2\cdot\log{\numstates})&\leq e^{-\numnodes\log{\numnodes}}\\
        \text{ since, }\log{\numstates}=c \cdot \numnodes\log{\numnodes}\numberthis\label{eqn13}\\
        \therefore \text{Expected Utility Loss } &= 2P\\
        &=\frac{2\numbins}{\epsilon}\cdot\left(\frac{3}{2}+\frac{6}{\sqrt{\numbins}}\right)\cdot\rho\\
        \text{ the }\log{\numstates}\text{ is }<< P\\
    \end{align*}
   Hence proved.
\end{proof}

\subsection{Evaluation Metrics for Social Recommendations}
\label{sec:eval-metrics}

We use mean average precision (MAP) and normalized
discounted cumulative gain (NDCG) for evaluating the performance of our recommendation algorithms. These two metrics give a sense of how relevant the top-$k$ recommendations are based on their presence as well as the order in which they are presented to the user. MAP is the mean of average precisions that are computed at each cut off point in a sequence of recommendations. The precision $p_i$ at the cut-off point of $i$ is defined as the number of true positives in the sequence of $i$ recommendations divided by $i$.
The relevance $r_i$ of an item $i$ is an indicator function which equals $1$ if
the item at rank $i$ is relevant and $0$ otherwise. 
The average precision $AP_k$ and the mean average precision $MAP_k$ are thus:
\begin{align*}
  AP_k = \frac{1}{N}\sum_{i}^{k} p_i \cdot r_i\text{;    }MAP_k = \frac{1}{k}{\sum_{i=1}^{k} AP_i}
\end{align*}

The normalized discounted gain is the discounted cumulative gain (DCG) normalized by
the ideal DCG (IDCG) score.
$DCG_k$ is a metric that evaluates the gain of an item based on its position in
the top-$k$ recommendation. In $DCG_k$, we penalize when highly relevant items
appear lower in the recommendation list by reducing their relevance $r_i$ is
reduced by the position $i$ in the top-$k$ list.
$IDCG_k$ represents the DCG for top-$k$ sorted items in the ground truth. 
We consider the relevance as binary with $r_i=1$ if the predected top-$K$ item is there in the ground truth.
\begin{align*}
  DCG_k = \sum_{i=1}^k\frac{r_i}{\log_2{(i+1)}}\text{;    }NDCG_k = \frac{DCG_k}{IDCG_k}
\end{align*}

\subsection{LDPGen non-reproducibility}
\label{sec:ldpgen}

We detail our efforts to reproduce both experiments and theoretical analysis of prior work that is closely related to ours~\cite{qin2017CCS}. The paper provides a good insight with regards to using degree vectors to capture edge information, however, we find some issues with the paper which might be helpful to the broader community if we point them out.

\paragraph{The formula for number of clusters is not valid for their datasets} The paper mentions in Section 4.2 (page 431 in the proceedings) {\em ``... Fortunately, by limiting $d_{x,y}$ and $k_1$ to a range (e.g., between $0$ and $50$, which is suitable for most social networks), we get a good approximated closed form as follows ...''}.
However, the networks which have been evaluated have a significant number of nodes with degree more than $50$. 
So, the closed-form expression (20) for the number of clusters $k_1$ does not apply to the evaluated datasets. In our paper, we  heuristically choose to use $K=\log{n}$ which applies regardless of the degree.

\paragraph{The formula for probability of edge between two nodes u,v does not seem to be correct}
The formula given in Section 4.4 is not symmetric, i.e., $p_{u,v}$ is not equal to $p_{v,u}$. Therefore, for undirected graphs it is not clear how to choose the probability. Further, the probability formula does not correspond to the Chung-Lu model as mentioned there. Therefore, we tried to replace the formula with our own expertise, however, we could not reproduce their results using our formulas.

\paragraph{We cannot reproduce the recommender system performance.}
LDPGen evaluates recommender system performance on the \lastfm and flixster datasets (Section 5.2, page 435). To the best of our ability, we have looked for state-of-the-art social recommendation systems that are not private. We cannot reproduce this for the reasons we stated above. Moreover, the state-of-the-art non-private baselines seem to record an NDCG score of $0.3$ on \lastfm~\cite{wang2018collaborative}. However, LDPGen based differentially private technique records an NDCG score of $0.8$.

\paragraph{The epsilon values reported are half of what they should be.} In the LDP setup, the same edge is queried twice, once by each user. Therefore, the privacy loss is double that of the one used by each user to answer the query.

One of the authors have confirmed the problems we have mentioned, however, the code was not given to us.  Therefore, after spending considerable time to figure out the right formulas and metrics used as per our interpretation we consider it non-reproducible.

\subsection{Observations that help \privaCTCF}
\label{sec:appobs}

The correlations between user-user similarities and their shortest path distances on the network are given in Table~\ref{tab:corr}.

\begin{table}[t]
\caption{ The correlations between the user-user similarities and the shortest path distances between them is given here. The negative values signify that most users are similar to other closer users in the network than the farther ones based on the shortest path distances between them.
}
\begin{tabular}{clcc}
                            &  \textbf{Rating profile}    & \textbf{\% negative corr.} & \textbf{corr. range}\\ \hline
\multirow{2}{*}{\lastfm}     & Partial & 76.8                   & [-0.25, 0.17]                 \\ \cline{2-4} 
                            & Full  & 100.0                  & [-0.23, 0.02]                 \\ \hline
\multirow{2}{*}{\delicious} & Partial & 96.6                   & [-0.26, 0.04]                 \\ \cline{2-4} 
                            & Full  & 100.0                   & [-0.23, 0.02]                  \\ \hline
\multirow{2}{*}{\douban}    & Partial & 95.1                   & [-0.14, 0.05]                  \\ \cline{2-4}
                            & Full  & 100.0                   & [-0.28, -0.01]                 \\ \hline
\end{tabular}
\label{tab:corr}
\end{table}

\end{document}